\newcommand{\ud}{\mathrm{d}}
\newcommand{\cD}{{\mathcal D}}
\numberwithin{equation}{section}
\newtheorem{theorem}{Theorem}[section]
\newtheorem{lemma}[theorem]{Lemma}
\newtheorem{prop}[theorem]{Proposition}
\newtheorem{remark}[theorem]{Remark}
\theoremstyle{definition}
\numberwithin{equation}{section}
\begin{document}

\thispagestyle{empty}

\vspace*{1cm}

\begin{center}

{\LARGE\bf On Lennard-Jones-type potentials on the half-line} \\

\vspace*{2cm}

{\large Federica Gregorio \footnote{E-mail address: {\tt fgregorio@unisa.it}} }%

\vspace*{5mm}

Dep. of Information Engineering, Electrical Engineering and Applied Mathematics\\
Universit\`{a} degli Studi di Salerno \\
Fisciano (SA)\\
Italy\\

\vspace*{10mm}

{\large and }

\vspace*{10mm}

{\large Joachim Kerner \footnote{E-mail address: {\tt joachim.kerner@fernuni-hagen.de}} }%

\vspace*{5mm}

Department of Mathematics and Computer Science\\
FernUniversit\"{a}t in Hagen\\
58084 Hagen\\
Germany\\

\end{center}

\vfill

\begin{abstract} In this paper we study a particle under the influence of a Lennard-Jones potential moving in a simple quantum wire modelled by the positive half-line. Despite its physical significance, this potential is only rarely studied in the literature and due to its singularity at the origin it cannot be considered as a standard perturbation of the one-dimensional Laplacian. It is therefore our aim to provide a thorough description of the full Hamiltonian in one dimension via the construction of a suitable quadratic form. Our results include a discussion of spectral and scattering properties which finally allows us to generalise some results from~\cite{RobinsonSingularI} as well as~\cite{RadinSimon}.
\end{abstract}

\newpage

\section{Introduction}
In this note we are concerned with the Schr\"{o}dinger operator 
\begin{equation}\label{SchrodingerOperator}
-\frac{\ud^2}{\ud x^2}+\left(\frac{\alpha}{|x|^{12}}-\frac{\beta}{|x|^{6}}\right)\ ,
\end{equation}
defined on the Hilbert space $L^2(\mathbb{R}_+)$ with $\alpha,\beta > 0$ some constants. The potential term in~\eqref{SchrodingerOperator} is the Lennard-Jones potential, arguably one of the most important potentials in solid-state physics which is used to describe the interaction between two neutral atoms \cite{LennardJones,DobbsJones}. Its main application lies with the description of the crystallisation of noble gases such as, e.g., argon. Note that the second term in the potential is the attractive van der Waals interaction (dipole-dipole interaction) and the first term is due to the repulsion at small distances which itself is a consequence of the Pauli exclusion principle. 

Of course, using a separation of variables into relative and center-of-mass coordinates, it is natural to consider the operator~\eqref{SchrodingerOperator} given one wanted to describe two neutral atoms on the full line $\mathbb{R}$. However, another interpretation is obtained by regarding $\mathbb{R}_+=[0,\infty)$ as a quantum wire (or quantum graph) with a vertex at the origin $x=0$ at which one imagines an additional complex internal structure such as a quantum dot~\cite{HarrisonValavanis}. Here a quantum dot has to be thought of as a relatively small box with hard walls at which another particle is placed (``particle in the box''). If one then assumes to first approximation that the particle in the box is not excited through the interaction with the other particle moving in the wire, the potential experienced by the particle in the wire is exactly of the Lennard-Jones type.

Also, from a mathematical point of view the Lennard-Jones potential is interesting due to its high degree of singularity at the origin which implies that it is not relatively bound with respect to the Laplacian, i.e., it is not of Kato class. This also implies that known results in the scattering theory for integrable and relatively bound potentials do not apply which then motivated Robinson to study highly singular potentials in more detail \cite{RobinsonSingularI}. In particular, he proved the existence and completeness of the wave operators for the Lennard-Jones potential in three dimensions [Theorem~5.6,\cite{RobinsonSingularI}]. It is one of the aims of this paper to generalise this result to one dimension and even generalising it by proving asymptotic completeness~\cite{BEH08}, see Section~\ref{SectionScattering}.

In addition, motivated by the classical $N$-body problem, Radin and Simon characterised, for $H=-\Delta+V$ on $L^2(\mathbb{R}^m)$ with $V$ in the Kato class, subspaces which are invariant under the time-evolution operator $e^{-iHt}$ which, in particular, implies explicit upper bounds on 
\begin{equation*}
\|x e^{-iHt}\varphi\|_{L^2(\mathbb{R}^m)}\ , \quad \varphi \in L^2(\mathbb{R}^m)\ ,
\end{equation*}
i.e., the expectation value of the mean distance from the origin showing that this norm remains finite in finite time. In Section~\ref{SectionInvariant} we will generalise this result to our setting, i.e., for a potential which is not of Kato class.

In Section~\ref{SectionFormulationModel} we start by establishing a rigorous realisation of the operator~\eqref{SchrodingerOperator} via the construction of a suitable quadratic form. We characterise the form as well as the operator domain explicitly, also proving $H^2$-regularity and essential self-adjointness of the minimal operator. In Section~\ref{SectionSpectrum} we then study spectral properties, characterising the discrete as well as the essential part of the spectrum. In addition, we prove that the essential part is purely absolutely continuous. 

We note that the treatment of singular potentials with quadratic form methods goes back to Simon's thesis (1971); see also \cite{NeidhardtZagrebnov} and references therein.

\section{Formulation of the model}\label{SectionFormulationModel}
We consider a (spinless) particle moving on the half-line $\mathbb{R}_+=[0,\infty)$ under the influence of an external potential of the Lennard-Jones type. More explicitly, the Hamiltonian of the particle shall be given by
\begin{equation}\label{Hamiltonian}
H_{\alpha,\beta}:=-\frac{\ud^2}{\ud x^2}+V(x):=-\frac{\ud^2}{\ud x^2}+\frac{\alpha}{|x|^{12}}-\frac{\beta}{|x|^{6}}\ ,
\end{equation}
with $\alpha,\beta > 0$. The quadratic form formally associated with this Hamiltonian is given by
\begin{equation*}\label{Form}
q_{\alpha,\beta}[\varphi]:=\int_{\mathbb{R}_+}|\nabla \varphi|^2 \ \ud x+\alpha\int_{\mathbb{R}_+}\frac{|\varphi|^2}{|x|^{12}}\, \ud x-\beta\int_{\mathbb{R}_+}\frac{|\varphi|^2}{|x|^{6}}\, \ud x\ .
\end{equation*}
\begin{theorem}\label{TheoremForm} The form $q_{\alpha,\beta}$ defined on
	\begin{equation}\label{FormDomain}
	\cD_{q}=\left\{\varphi \in H^1(\mathbb{R}_+): \int_{\mathbb{R}_+}V(x)|\varphi(x)|^2 \ \ud x < \infty  \right\}
	\end{equation}
 is densely defined, closed and bounded from below on $L^2(\mathbb{R}_+)$.
\end{theorem}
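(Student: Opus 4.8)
The plan is to decompose $q_{\al,\be}=q_1-q_2$ into the two non-negative quadratic forms
\[
q_1[\varphi]:=\int_{\rz_+}|\varphi'|^2\,\ud x+\al\int_{\rz_+}\frac{|\varphi|^2}{|x|^{12}}\,\ud x\ ,\qquad q_2[\varphi]:=\be\int_{\rz_+}\frac{|\varphi|^2}{|x|^6}\,\ud x\ ,
\]
and to treat $q_2$ as a small form perturbation of $q_1$. First I would observe that, since the negative part of $V$ is bounded on $\rz_+$ while near the origin $V(x)$ is comparable to $\al|x|^{-12}$, the integrability condition appearing in~\eqref{FormDomain} is equivalent to $\int_{\rz_+}|x|^{-12}|\varphi|^2\,\ud x<\infty$; hence $\cD_q$ coincides with the natural form domain $\cD_{q_1}=\{\varphi\in H^1(\rz_+):q_1[\varphi]<\infty\}$ of $q_1$. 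Density of $\cD_q$ in $L^2(\rz_+)$ is then immediate, as $C_c^\infty((0,\infty))\subset\cD_q$ is already dense.

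For boundedness from below I would use the elementary pointwise bound: writing $u=|x|^{-6}\in(0,\infty)$, the parabola $u\mapsto\al u^2-\be u$ attains the minimum value $-\be^2/(4\al)$, so $V(x)\geq-\be^2/(4\al)$ for every $x>0$. On $\cD_q$ the positive part of $\int_{\rz_+}V|\varphi|^2\,\ud x$ is finite by definition and the negative part is dominated by $\tfrac{\be^2}{4\al}\|\varphi\|^2_{L^2(\rz_+)}$, so the integral converges absolutely and $q_{\al,\be}[\varphi]\geq\int_{\rz_+}|\varphi'|^2\,\ud x-\tfrac{\be^2}{4\al}\|\varphi\|^2_{L^2(\rz_+)}\geq-\tfrac{\be^2}{4\al}\|\varphi\|^2_{L^2(\rz_+)}$.

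Closedness is the crux. I would first record that $q_1$ is closed on $\cD_q$, being the sum of the closed non-negative forms $\varphi\mapsto\|\varphi'\|^2_{L^2(\rz_+)}$ (closed on $H^1(\rz_+)$ since that space is complete) and $\varphi\mapsto\al\int_{\rz_+}|x|^{-12}|\varphi|^2\,\ud x$ (the form of multiplication by a non-negative measurable function, closed on its maximal domain by Fatou's lemma). Next I would show that $q_2$ is infinitesimally $q_1$-form bounded: for any $\de>0$, splitting the integral at $\de$ and using $|x|^{-6}\leq\de^6|x|^{-12}$ on $(0,\de)$ together with $|x|^{-6}\leq\de^{-6}$ on $(\de,\infty)$ yields
\[
q_2[\varphi]\leq\frac{\be\,\de^6}{\al}\,q_1[\varphi]+\frac{\be}{\de^6}\,\|\varphi\|^2_{L^2(\rz_+)}\ ,
\]
whose $q_1$-coefficient can be made arbitrarily small. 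The KLMN theorem then gives at once that $q_{\al,\be}=q_1-q_2$ is closed and bounded from below on $\cD_q=\cD_{q_1}$; equivalently, the displayed estimate shows the form norm of $q_{\al,\be}$ is equivalent to that of $q_1$ on $\cD_q$, and completeness transfers.

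The main obstacle I anticipate is the bookkeeping around the domain: verifying carefully that~\eqref{FormDomain} really coincides with $\cD_{q_1}$, so that there is no mismatch between the form domain one wants and the one KLMN produces, and checking that $q_{\al,\be}$ is genuinely well defined (no ``$\infty-\infty$'') on all of $\cD_q$ — which is exactly what the finiteness of $q_2[\varphi]$ in the displayed inequality guarantees. Everything else reduces to standard quadratic-form machinery.
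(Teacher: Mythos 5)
Your proposal is correct, and it reaches the conclusion by a genuinely different route than the paper. The paper proves closedness ``by hand'': it takes a Cauchy sequence in the form norm, extracts the $H^1$-limit from completeness of $H^1(\rz_+)$, and then controls the singular part of the potential near the origin with Fatou's lemma while using boundedness of $V$ on $(x_0,\infty)$; lower-boundedness comes from the same pointwise minimisation of $V$ that you perform (the paper's $\gamma=V\bigl(\sqrt[6]{2\al/\be}\bigr)$ equals your $-\be^2/(4\al)$). You instead package the argument structurally: $q_1$ is closed as a sum of two closed non-negative forms, the attractive term $q_2$ is infinitesimally $q_1$-form bounded via the elementary splitting $|x|^{-6}\le \de^6|x|^{-12}$ on $(0,\de)$ and $|x|^{-6}\le\de^{-6}$ on $(\de,\infty)$, and KLMN finishes. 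Your route buys a little more: it shows the form norm of $q_{\al,\be}$ is equivalent to that of $q_1$, hence identifies $\cD_q$ as $H^1(\rz_+)\cap L^2(\rz_+,|x|^{-12}\ud x)$ with equivalent norms, and it delivers lower-boundedness and closedness in one stroke; the paper's route is more elementary and avoids invoking KLMN. One small point in your favour: you correctly note that density comes from $C_c^\infty((0,\infty))\subset\cD_q$, whereas smooth functions not vanishing at the origin need not lie in $\cD_q$ (cf.\ Proposition~\ref{PropZero}), a distinction the paper's one-line density argument glosses over. Your identification of $\cD_q$ with $\cD_{q_1}$ is also sound, since the negative part of $V$ is bounded and $V(x)\ge\tfrac{\al}{2}|x|^{-12}$ near the origin.
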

\begin{proof} Since $C^{\infty}_0(\mathbb{R}_+)$ is dense in $L^2(\mathbb{R}_+)$, density follows readily. 
	
	In a next step we realise that $V(x)=\frac{\alpha}{|x|^{12}}-\frac{\beta}{|x|^{6}} \geq 0$ for all $x \leq x_0:=\sqrt[6]{\frac{\alpha}{\beta}}$. Furthermore, $V(x) \geq V\left(\sqrt[6]{\frac{2\alpha}{\beta}}\right)$ for all $x\in\mathbb{R}_+$ which implies, setting $\gamma:=V\left(\sqrt[6]{\frac{2\alpha}{\beta}}\right)$,
	\begin{equation*}
	q_{\alpha,\beta}[\varphi] \geq -\left|\gamma\right|\cdot \|\varphi\|^2_{L^2(\mathbb{R}_+)}
	\end{equation*}
	and hence the form is bounded from below. 
	
	Now, let $(\varphi_n)_{n \in \mathbb{N}} \subset H^1(\mathbb{R}_+)$ be a Cauchy sequence with respect to the form norm $\|\cdot\|^2_{q}:=q_{\alpha,\beta}[\cdot]+\left[\left|\gamma\right|+1\right]\cdot \|\cdot\|^2_{L^2(\mathbb{R}_+)}$. Due to completeness of $H^1(\mathbb{R}_+)$ there exists a function $\varphi \in H^1(\mathbb{R}_+)$ such that $\varphi_n \rightarrow \varphi$ in $H^1(\mathbb{R}_+)$-norm. Furthermore, employing the Lemma of Fatou we obtain
	\begin{equation}\label{EquationI}\begin{split}
	\left|\int_{\mathbb{R}_+}V(x)|\varphi(x)-\varphi_n(x)|^2\ \ud x\right|&=\int_{(0,x_0)}V(x)|\varphi(x)-\varphi_n(x)|^2\ \ud x\ \\ 
	& \quad +\int_{(x_0,\infty)}\left|V(x)\right| |\varphi(x)-\varphi_n(x)|^2\ \ud x \\
	\ \ &\leq \liminf_{k \rightarrow \infty}\int_{(0,x_0)}V(x)|\varphi_{n_k}(x)-\varphi_n(x)|^2\ \ud x \\
	& \quad + \|V|_{(x_0,\infty)}\| \cdot \|\varphi-\varphi_n\|^2_{L^2(x_0,\infty)}\\
	&\leq \varepsilon\ ,
	\end{split}
	\end{equation}
	for $k,n \geq n_0$, $n_0 \in \mathbb{N}$, since $(\varphi_n)_{n \in \mathbb{N}}$ is Cauchy with respect to the form norm. Using~\eqref{EquationI} we readily conclude that $\varphi \in \cD_q$ and $q_{\alpha,\beta}[\varphi_n-\varphi] \rightarrow 0$ as $n \rightarrow \infty$.\\
\end{proof}
Since the form domain~\eqref{FormDomain} is not very explicit, we aim to characterise it further. In a first result we show that all functions in this domain fulfil Dirichlet boundary conditions at zero.
\begin{prop}\label{PropZero} For $\varphi \in \cD_q$ one has $\varphi(0)=0$. 
\end{prop}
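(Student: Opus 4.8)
The statement rests on two facts: first, every $\varphi\in\cD_q\subset H^1(\mathbb{R}_+)$ admits a continuous representative on $[0,\infty)$ (the one-dimensional Sobolev embedding $H^1(\mathbb{R}_+)\hookrightarrow C_b([0,\infty))$), so that the pointwise value $\varphi(0)$ is well-defined; second, the potential $V$ is so strongly singular at the origin that a nonzero boundary value would force the integral $\int_{\mathbb{R}_+}V|\varphi|^2\,\ud x$ to diverge, contradicting the defining condition of $\cD_q$. The plan is therefore a proof by contradiction.

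First I would fix the continuous representative of $\varphi$ and assume, for contradiction, that $\varphi(0)=c\neq 0$. By continuity there is a $\delta\in(0,x_0]$, with $x_0=\sqrt[6]{\alpha/\beta}$ as in the proof of Theorem~\ref{TheoremForm}, such that $|\varphi(x)|^2\geq \tfrac12|c|^2$ for all $x\in(0,\delta)$. Next I would record the elementary lower bound on $V$ near zero: since $V(x)=x^{-12}(\alpha-\beta x^6)$, we have $V(x)\geq \tfrac{\alpha}{2}\,x^{-12}$ for all $x$ with $x^6\leq \alpha/(2\beta)$; shrinking $\delta$ if necessary we may assume this holds on $(0,\delta)$ as well.

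Combining these two estimates gives
\begin{equation*}
\int_{\mathbb{R}_+}V(x)|\varphi(x)|^2\,\ud x\ \geq\ \int_{(0,\delta)}V(x)|\varphi(x)|^2\,\ud x\ \geq\ \frac{\alpha|c|^2}{4}\int_0^{\delta}\frac{\ud x}{x^{12}}\ =\ +\infty\ ,
\end{equation*}
where the splitting $\int_{\mathbb{R}_+}=\int_{(0,\delta)}+\int_{(\delta,\infty)}$ is legitimate because $V$ is bounded on $(\delta,\infty)$ (so the tail contributes a finite amount, being controlled by $\|\varphi\|_{L^2}^2$) and $V\geq 0$ on $(0,\delta)\subset(0,x_0)$. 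This contradicts $\varphi\in\cD_q$, and hence $\varphi(0)=0$.

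There is no serious obstacle here; the only point requiring a little care is the justification of the Sobolev embedding and the use of a genuinely continuous representative, together with the harmless observation that the integral over $(\delta,\infty)$ is finite so that the divergence is entirely due to the region near the origin. The key mechanism is simply that $x^{-12}$ is not integrable at $0$, so even the mildest nonvanishing of $\varphi$ at the origin is incompatible with membership in $\cD_q$.
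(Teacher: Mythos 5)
Your proof is correct and follows essentially the same route as the paper: invoke the Sobolev embedding to get a continuous representative, assume $\varphi(0)\neq 0$, and derive a contradiction from the non-integrability of $x^{-12}|\varphi|^2$ near the origin. Your version merely spells out the explicit lower bound $V(x)\geq\tfrac{\alpha}{2}x^{-12}$ near zero, which the paper leaves implicit.
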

\begin{proof} We first note that $|\varphi(0)| < \infty $ due to the trace theorem~\cite{Dob05} for Sobolev functions.
	
	 Now assume that $|\varphi(0)| > 0$: Since $H^1(\mathbb{R}_+)$-functions are continuous, we conclude the existence of a small interval $[0,\delta)$, $\delta > 0$, such that $|\varphi(x)| > \varepsilon$ for all $x \in [0,\delta)$ and some $\varepsilon > 0$. This however immediately implies that 
	 \begin{equation*}
	 \int_{0}^{\delta}V(x)|\varphi(x)|^2 \ \ud x = \infty
	 \end{equation*}
	 and hence $\varphi \notin \cD_q$.
\end{proof}
%
%
By the representation theorem of quadratic forms \cite{BEH08} there exists a unique self-adjoint operator $A$ with domain $\cD(A) \subset \cD_q$ being associated with $q_{\alpha,\beta}$. As shown in the next statement, this operator is indeed given by~\eqref{Hamiltonian}.
\begin{lemma}\label{DomainOperator} The operator $A$ associated with $q_{\alpha,\beta}$ coincides with the operator $H_{\alpha,\beta}$.
%
\end{lemma}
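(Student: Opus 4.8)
The plan is to apply the first representation theorem, which characterises $A$ by $\cD(A)=\{\psi\in\cD_q:\exists\,g\in L^2(\mathbb{R}_+)\ \text{with}\ q_{\alpha,\beta}[\varphi,\psi]=\langle\varphi,g\rangle_{L^2(\mathbb{R}_+)}\ \text{for all}\ \varphi\in\cD_q\}$ and $A\psi=g$. The first step is to check that $A$ acts as the Lennard--Jones differential expression. For $\psi\in\cD(A)$, insert test functions $\varphi\in C^\infty_0(\mathbb{R}_+)\subset\cD_q$ into $q_{\alpha,\beta}[\varphi,\psi]=\langle\varphi,A\psi\rangle$; since such $\varphi$ are supported away from the origin, integration by parts produces no boundary terms and yields $-\psi''+V\psi=A\psi$ in $\mathcal{D}'(0,\infty)$. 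As $V\in C^\infty(0,\infty)$ and $\psi,A\psi\in L^2(\mathbb{R}_+)$, this forces $\psi''\in L^2_{\mathrm{loc}}(0,\infty)$, hence $\psi\in H^2_{\mathrm{loc}}(0,\infty)$, so $\psi,\psi'$ are locally absolutely continuous on $(0,\infty)$ and $A\psi=-\psi''+V\psi$ holds pointwise a.e.; moreover $\psi(0)=0$ by Proposition~\ref{PropZero}. Running the same computation with $\psi\in C^\infty_0(\mathbb{R}_+)$ shows in addition $C^\infty_0(\mathbb{R}_+)\subset\cD(A)$ with $A=-\ud^2/\ud x^2+V$ there, so $A$ extends the minimal operator attached to~\eqref{Hamiltonian}.

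For the converse, take $\psi\in\cD_q$ with $g:=-\psi''+V\psi\in L^2(\mathbb{R}_+)$; the goal is $\psi\in\cD(A)$, i.e.\ $q_{\alpha,\beta}[\varphi,\psi]=\langle\varphi,g\rangle$ for all $\varphi\in\cD_q$. Integrating by parts on a truncated interval $(\delta,R)$, the boundary contribution at $R$ vanishes as $R\to\infty$: there $V$ is bounded, so $\psi''=V\psi-g\in L^2$ near $+\infty$, hence $\psi\in H^2$ near $+\infty$ and $\psi'(R)\to0$, while $\varphi(R)\to0$ because $\varphi\in H^1(\mathbb{R}_+)$. As $\delta\to0$ the products $\overline{\varphi'}\psi'$ and $\overline\varphi\psi''$ lie in $L^1(0,1)$ --- for the $V\psi$ piece by Cauchy--Schwarz against $\int_0^1V|\varphi|^2$ and $\int_0^1V|\psi|^2$, both finite, $V$ being bounded away from the origin --- so $B(\varphi,\psi):=\lim_{\delta\to0}\overline{\varphi(\delta)}\psi'(\delta)$ exists and $q_{\alpha,\beta}[\varphi,\psi]=\langle\varphi,g\rangle-B(\varphi,\psi)$. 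For fixed $\psi$, the functional $\varphi\mapsto B(\varphi,\psi)$ is bounded with respect to the form norm $\|\cdot\|_q$, and it vanishes on $C^\infty_0(\mathbb{R}_+)$ (there $\overline{\varphi(\delta)}=0$ for small $\delta$). Hence the whole matter reduces to the \emph{density of $C^\infty_0(\mathbb{R}_+)$ in $(\cD_q,\|\cdot\|_q)$}: once this is known, $B(\cdot,\psi)\equiv0$, so $\psi\in\cD(A)$ and $A\psi=g$, which together with the first step gives $A=H_{\alpha,\beta}$.

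The main obstacle is precisely this form-density --- equivalently, that $x=0$ is a limit-point endpoint --- and this is where the $1/|x|^{12}$ singularity is genuinely used. I would prove it by a cut-off argument. For $\varphi\in\cD_q$ and $\epsilon<x_0$, choose $\zeta_\epsilon$ with $\zeta_\epsilon\equiv1$ on $[\epsilon,\infty)$, $\zeta_\epsilon\equiv0$ near $0$, and $|\zeta_\epsilon'|\lesssim1/\epsilon$ on $[\epsilon/2,\epsilon]$. Every term of $\|\varphi-\zeta_\epsilon\varphi\|_q^2$ is a tail of a convergent integral except $\int_{\epsilon/2}^{\epsilon}|\zeta_\epsilon'\varphi|^2$, and there, since $V$ is decreasing on $(0,x_0)$ with $V(\epsilon)^{-1}\lesssim\epsilon^{12}$, one gets $\int_{\epsilon/2}^{\epsilon}|\zeta_\epsilon'\varphi|^2\lesssim\epsilon^{-2}\int_{\epsilon/2}^{\epsilon}|\varphi|^2\lesssim\epsilon^{-2}V(\epsilon)^{-1}\int_{\epsilon/2}^{\epsilon}V|\varphi|^2\lesssim\epsilon^{10}\int_{\epsilon/2}^{\epsilon}V|\varphi|^2\to0$ as $\epsilon\to0$; truncating at large $x$ in the same way and mollifying then produces $C^\infty_0(\mathbb{R}_+)$ approximants in the form norm. (The exponent gain is destroyed for a mere inverse-square singularity, consistent with the limit-circle situation there; this is exactly where the Lennard--Jones repulsion enters decisively.) Alternatively one may bypass the boundary term altogether by first establishing essential self-adjointness of the minimal operator --- limit point at $0$ because $V(x)\ge\tfrac34x^{-2}$ for small $x$, and limit point at $+\infty$ because $V$ is bounded there --- after which the self-adjoint operator $A$, extending the then essentially self-adjoint minimal operator, must coincide with its closure, namely $H_{\alpha,\beta}$.
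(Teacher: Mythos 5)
Your proposal is correct, and it proves strictly more than the paper does. The paper's own argument is exactly your first step and nothing else: it tests $q_{\alpha,\beta}[\varphi,h]=(A\varphi,h)$ against $h\in C^\infty_0(\tfrac1n,\infty)$, uses local $H^2$-regularity to integrate by parts, and concludes that $A$ acts as the differential expression $-\tfrac{\ud^2}{\ud x^2}+V$ on each $(\tfrac1n,\infty)$; the domain question is then sidestepped by \emph{defining} $\cD(H_{\alpha,\beta}):=\cD(A)$ in the subsequent remark, and the identification with the closure of the minimal operator is deferred to Proposition~\ref{EssentialSelfAdjointness} via citation of Sturm--Liouville theory. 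Your converse direction --- showing that every $\psi\in\cD_q$ with $-\psi''+V\psi\in L^2(\mathbb{R}_+)$ already lies in $\cD(A)$, by controlling the boundary functional $B(\cdot,\psi)$ through the density of $C^\infty_0$ in $(\cD_q,\|\cdot\|_q)$ --- is not in the paper at all, and your cut-off estimate $\epsilon^{-2}\int_{\epsilon/2}^{\epsilon}|\varphi|^2\lesssim\epsilon^{10}\int_{\epsilon/2}^{\epsilon}V|\varphi|^2$ is a clean, self-contained way to see where the strength of the $|x|^{-12}$ repulsion (equivalently, the limit-point property at $0$) actually enters; your closing alternative via essential self-adjointness of the minimal operator is essentially the route the paper takes later, but you correctly order it so that no circularity arises. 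What your version buys is an explicit maximal-domain characterisation $\cD(A)=\{\psi\in\cD_q:-\psi''+V\psi\in L^2(\mathbb{R}_+)\}$, which the paper never states; what the paper's version buys is brevity, at the cost of leaving the lemma's content reduced to a statement about the action of $A$ rather than about the operator as a whole.
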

\begin{proof} According to the representation theorem of forms~\cite{BEH08} we have the abstract characterisation
\begin{equation}\label{AbstractCharacterisationOperator}\begin{split}
\cD(A)&:=\{\varphi\in \cD_q:\,\exists \psi\in L^2(\mathbb{R}_+)\ \textrm{s.t.}\ q_{\alpha,\beta}[\varphi,h]=(\psi,h)\ \textrm{for all}\ h\in\cD_q\}\ ,\\
A\varphi&:=\psi\ .
\end{split}
\end{equation}
Now, let $\varphi\in \cD(A)$ be given: Then $\varphi_n:=\chi_{(\frac{1}{n},\infty)}\varphi \in H^2(\frac{1}{n},\infty)$ (note that this follows, e.g., using the difference quotient technique~\cite{GilTru83,Dob05} which is the standard technique to show local $H^2$-regularity
) and an integration by parts yields
\begin{equation*}\begin{split}
q_{\alpha,\beta}[\varphi_n,h]&=\int_{\frac{1}{n}}^\infty \overline{\nabla \varphi_n}\cdot \nabla h\, \ud x+\alpha\int_{\frac{1}{n}}^\infty\frac{\overline{\varphi_n} h}{|x|^{12}}\, \ud x-\beta\int_{\frac{1}{n}}^\infty\frac{\overline{\varphi_n} h}{|x|^{6}}\, \ud x\\
&=-\int_{\frac{1}{n}}^\infty \overline{\Delta \varphi_n}  h\, \ud x+\alpha\int_{\frac{1}{n}}^\infty\frac{\overline{\varphi_n} h}{|x|^{12}}\, \ud x-\beta\int_{\frac{1}{n}}^\infty\frac{\overline{\varphi_n} h}{|x|^{6}}\, \ud x\\
&=(H_{\alpha,\beta} \varphi_n, h)\ .
\end{split}
\end{equation*}
By~\eqref{AbstractCharacterisationOperator} one then obtains
\[q_{\alpha,\beta}[\varphi_n,h]=(H_{\alpha,\beta}\varphi_n,h)=(A\varphi_n,h)\]
and therefore
\begin{equation*}
(A\varphi)|_{(\frac{1}{n},\infty)}=(H_{\alpha,\beta}\varphi)|_{(\frac{1}{n},\infty)} 
\end{equation*}
by choosing $h \in C^{\infty}_0(\frac{1}{n},\infty)$ and taking density of $C^{\infty}_0(\mathbb{R}_+)$ in $L^2(\mathbb{R}_+)$ into account.
\end{proof}
\begin{remark} Based on Lemma~\ref{DomainOperator} we set $\cD(H_{\alpha,\beta}):=\cD(A)$ and write $A=H_{\alpha,\beta}$.
\end{remark}
Until now we have identified one self-adjoint realisation of $(H_{\alpha,\beta},C^{\infty}_0(\mathbb{R}_+))$ through the construction of a suitable quadratic form. The following result shows that this is indeed the only one existing.
\begin{prop}\label{EssentialSelfAdjointness}
The operator $(H_{\alpha,\beta},\cD(H_{\alpha,\beta}))$ is the unique self-adjoint extension of $(H_{\alpha,\beta},C^{\infty}_0(\mathbb{R}_+))$.
\end{prop}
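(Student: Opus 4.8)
The plan is to prove essential self-adjointness of $(H_{\alpha,\beta}, C^\infty_0(\mathbb{R}_+))$ by showing that the defect spaces are trivial, equivalently that the only $\psi \in L^2(\mathbb{R}_+)$ solving $(H_{\alpha,\beta} \pm \ui)\psi = 0$ in the distributional sense on $(0,\infty)$ is $\psi = 0$. Since $V$ is smooth on $(0,\infty)$, any such $\psi$ is automatically a classical solution of the ODE $-\psi'' + V\psi = \mp \ui \psi$ away from the origin, so the analysis splits into the behaviour near $x = 0$ and the behaviour near $x = +\infty$. At infinity the potential is bounded (indeed vanishing), so limit-point classification at $+\infty$ is standard: of the two linearly independent solutions, only one is square-integrable near $+\infty$, and this already kills one dimension. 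The real work is at the origin.

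Near $x=0$ the dominant term is the strongly repulsive $\alpha/x^{12}$, and the key point is to show that $H_{\alpha,\beta}$ is in the limit-point case at $0$ as well; then Weyl's theory gives essential self-adjointness. To see this I would invoke the standard criterion for strongly singular positive potentials: if $V(x) \ge \frac{3}{4x^2}$ near the origin (and here $V(x) \sim \alpha/x^{12}$ which dominates any inverse-square bound) then $0$ is limit-point. Concretely one compares with the equation $-u'' + \tfrac{c}{x^{12}} u = 0$ and shows via an asymptotic (WKB/Liouville–Green) analysis that the two fundamental solutions behave like $x^{3}\exp(\pm \tfrac{2\sqrt{\alpha}}{5} x^{-5})$ near $0$, so that exactly one of them — the recessive one — is $L^2$ near $0$, while the dominant one blows up too fast to be integrable. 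A clean way to package this without a full WKB computation: show directly that if $\psi$ solves the eigenvalue equation and is $L^2$ near $0$, then so is $\psi'$ and moreover the boundary term $\overline{\psi}\psi' - \overline{\psi'}\psi$ (the Wronskian-type quantity) tends to $0$ as $x \to 0^+$, which is the Weyl limit-point condition. Alternatively, one can cite the known limit-point classification for potentials $\sim x^{-\nu}$ with $\nu > 2$ directly from Reed–Simon or similar.

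Having established limit-point at both endpoints, Weyl's alternative yields that $(H_{\alpha,\beta}, C^\infty_0(0,\infty))$ is essentially self-adjoint, hence has a unique self-adjoint extension. It then remains only to identify that unique extension with the operator $H_{\alpha,\beta} = A$ constructed via the form. But $A$ is by Lemma~\ref{DomainOperator} a self-adjoint operator acting as $-\ud^2/\ud x^2 + V$ that extends the minimal operator on $C^\infty_0(\mathbb{R}_+)$ (every such test function lies in $\cD_q$ and a direct integration by parts shows $q_{\alpha,\beta}[\varphi,h] = (H_{\alpha,\beta}\varphi, h)$ for $\varphi, h \in C^\infty_0(\mathbb{R}_+)$, so $C^\infty_0(\mathbb{R}_+) \subset \cD(A)$), and by uniqueness of the self-adjoint extension it must coincide with the closure, i.e. $A$ is the unique self-adjoint extension.

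The main obstacle is the limit-point analysis at the origin: one must rule out that a combination of the two local solutions — one decaying super-exponentially, one growing super-exponentially as $x\to 0^+$ — could accidentally be square-integrable, and more importantly verify that the growing solution genuinely fails to be in $L^2$ near $0$ (so that the limit-point, not limit-circle, case holds). This requires either a careful asymptotic/Liouville–Green estimate of the fundamental system near $x=0$, or an appeal to a ready-made classification theorem for potentials with an $x^{-\nu}$, $\nu>2$, singularity; the subtlety is purely at this singular endpoint, since the behaviour at $+\infty$ and the algebraic identification with $A$ are routine. One should also take a moment to confirm that Proposition~\ref{PropZero}'s Dirichlet condition at $0$ is consistent with — indeed forced by — the limit-point picture (a square-integrable solution of the equation near $0$ necessarily vanishes there), so that no boundary condition need be imposed and the extension is genuinely unique.
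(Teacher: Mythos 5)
Your proposal is correct and follows essentially the same route as the paper: establish the limit-point case at both endpoints (at $0$ via the strong repulsive singularity, e.g.\ the criterion $V(x)\ge \tfrac{3}{4x^2}$ near the origin, and at $\infty$ via boundedness of $V$), conclude deficiency indices $(0,0)$ by Weyl's alternative, and identify the unique self-adjoint extension with the form operator $A$; the paper simply cites the corresponding Sturm--Liouville results in Schm\"{u}dgen rather than sketching the verification. (A cosmetic point: the Liouville--Green exponent near $0$ is $\tfrac{\sqrt{\alpha}}{5}x^{-5}$ rather than $\tfrac{2\sqrt{\alpha}}{5}x^{-5}$, which does not affect the conclusion.)
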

\begin{proof} This follows from the theory of Sturm-Liouville operators as presented in~\cite{schmudgen2012unbounded}. In particular, [Propositions~15.11,~15.12,\cite{schmudgen2012unbounded}] show that $(H_{\alpha,\beta},C^{\infty}_0(\mathbb{R}_+))$ is in the so-called limit point case at zero and at infinity. The statement then follows with [Theorem~15.10,\cite{schmudgen2012unbounded}] which shows that $(H_{\alpha,\beta},C^{\infty}_0(\mathbb{R}_+))$ has deficiency indices $(0,0)$.
\end{proof}
In a next step we characterise the domain of $H_{\alpha,\beta}$ in more detail by proving that $D(H_{\alpha,\beta}) \subset H^2(\mathbb{R}_+)$. In other words, we establish (global) $H^2$-regularity~\cite{Gri85,GilTru83,Dob05}. We also prove that the derivative at zero vanishes, i.e., functions in the operator domain fulfil also Neumann boundary conditions.
\begin{theorem}\label{PropZeroI} One has $D(H_{\alpha,\beta}) \subset H^2(\mathbb{R}_+)$. Furthermore, if
	$\varphi \in D(H_{\alpha,\beta})$ then $\varphi'(0)=0$. 
\end{theorem}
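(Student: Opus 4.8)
The plan is to prove $H^2$-regularity via the standard bootstrap: given $\varphi \in D(H_{\alpha,\beta})$, set $\psi := H_{\alpha,\beta}\varphi \in L^2(\mathbb{R}_+)$, so that $-\varphi'' = \psi - V\varphi$ in the distributional sense on $(0,\infty)$. First I would note that on any interval $(a,\infty)$ with $a > 0$ the potential $V$ is bounded, so $V\varphi \in L^2(a,\infty)$ and hence $\varphi'' \in L^2(a,\infty)$, giving local $H^2$-regularity away from the origin (this is already implicit in the proof of Lemma~\ref{DomainOperator}). The real issue is integrability near $x = 0$, where $V$ blows up like $|x|^{-12}$. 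The key observation is that near zero $V \geq 0$ (indeed $V(x) \geq 0$ for $x \leq x_0 = \sqrt[6]{\alpha/\beta}$), so one has good sign information: from $q_{\alpha,\beta}[\varphi] = (\psi,\varphi)$ one already knows $\int_0^{x_0} V|\varphi|^2\,\ud x < \infty$, i.e. $|x|^{-6}\varphi \in L^2(0,x_0)$, which in particular gives $V\varphi = (\alpha|x|^{-12}-\beta|x|^{-6})\varphi \in L^1_{\mathrm{loc}}$ but \emph{not} obviously in $L^2$ near zero.

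To upgrade this, I would exploit the ODE $-\varphi'' + V\varphi = \psi$ directly as a Sturm–Liouville equation near the origin, combined with the known decay $\varphi(0) = 0$ from Proposition~\ref{PropZero}. The idea is that the repulsive singularity $\alpha|x|^{-12}$ forces solutions that stay in $L^2$ to decay extremely fast as $x \to 0^+$ — faster than any power — because the equation is in the limit-point case at zero (Proposition~\ref{EssentialSelfAdjointness}) and the recessive solution near an $|x|^{-12}$ singularity behaves like $\exp(-c/(5x^5))$ for an appropriate $c = \sqrt{\alpha}/5$ (WKB/Liouville–Green analysis of $-\varphi'' + \alpha x^{-12}\varphi \approx 0$). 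Concretely, I would: (i) use the form identity to get the weighted bound $\|x^{-6}\varphi\|_{L^2(0,x_0)} < \infty$; (ii) from the ODE, write $\varphi$ near zero as a combination of the two fundamental solutions and argue that the $L^2$-constraint kills the dominant (exponentially growing as $x\to 0$) one, leaving only the exponentially-decaying branch; (iii) conclude that $x^{-12}\varphi$ and $x^{-6}\varphi$, hence $V\varphi$, are not merely in $L^2$ but actually vanish faster than any polynomial at $0$, so $\varphi'' = V\varphi - \psi \in L^2(0,x_0)$; together with step one this yields $\varphi'' \in L^2(\mathbb{R}_+)$ and $\varphi \in H^2(\mathbb{R}_+)$.

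For the Neumann condition $\varphi'(0) = 0$: once $\varphi \in H^2(\mathbb{R}_+)$ we know $\varphi'$ is continuous up to $0$, so $\varphi'(0)$ exists. I would integrate the equation: $\varphi'(x) - \varphi'(\epsilon) = -\int_\epsilon^x (\psi - V\varphi)(t)\,\ud t$. As $\epsilon \to 0^+$, the right-hand side converges because $\psi \in L^2 \subset L^1_{\mathrm{loc}}$ and $V\varphi \in L^1(0,x_0)$ by step (iii) above (super-polynomial decay of $\varphi$ beats the $|x|^{-12}$). Hence $\lim_{\epsilon\to 0}\varphi'(\epsilon)$ exists and equals $\varphi'(0)$; but the rapid vanishing of $\varphi$ near $0$ forces its derivative to vanish there as well — more carefully, if $\varphi'(0) = c \neq 0$ then $\varphi(x) \sim cx$ near $0$, contradicting $\int_0^{x_0}|x|^{-12}|\varphi|^2\,\ud x < \infty$. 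So $\varphi'(0) = 0$.

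The main obstacle is step (ii)–(iii): making rigorous the claim that the $L^2$-integrable solution of $-\varphi'' + V\varphi = \psi$ actually decays super-polynomially (not just that $x^{-6}\varphi \in L^2$). This requires a careful Liouville–Green / comparison argument for the singular Sturm–Liouville equation near $x=0$, controlling the inhomogeneity $\psi$; alternatively one can invoke the limit-point classification together with an explicit asymptotic analysis of the fundamental system near the $|x|^{-12}$ singularity as developed in the Sturm–Liouville literature \cite{schmudgen2012unbounded}. Everything else — the away-from-zero regularity, the boundedness of $V$ on $(x_0,\infty)$, and the integration-by-parts argument for the boundary value — is routine.
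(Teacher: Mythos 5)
Your architecture is reasonable and the Neumann part is essentially the paper's own argument: once $\varphi\in H^2(\mathbb{R}_+)$ one has $\varphi\in C^1[0,\infty)$, $\varphi(x)/x\to\varphi'(0)$, and $\varphi'(0)\neq0$ would force $\int_0^\delta V(x)|\varphi(x)|^2\,\ud x\geq \varepsilon^2\int_0^\delta x^2V(x)\,\ud x=\infty$, contradicting $\varphi\in\cD_q$. The problem lies in the decisive step (iii). You claim that the $L^2$/form-domain constraint forces $\varphi$, and a fortiori $V\varphi$, to vanish faster than any polynomial at the origin. That is true only for the \emph{homogeneous} recessive solution $u_-\sim x^{3}\exp(-\sqrt{\alpha}/(5x^{5}))$. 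For the inhomogeneous equation $-\varphi''+V\varphi=\psi$ with $\psi=H_{\alpha,\beta}\varphi\in L^2$, the particular solution behaves to leading order like $\psi/V\sim\alpha^{-1}x^{12}\psi(x)$ near $0$: if, say, $\psi\equiv1$ near the origin, then $\varphi\approx x^{12}/\alpha$ and $V\varphi\to1$, which does not vanish at all, let alone super-polynomially; if $\psi(x)=x^{-1/4}$ (still in $L^2$), $\varphi$ decays only like a power. So the intermediate assertion on which your proof of $V\varphi\in L^2(0,x_0)$ rests is false. What is true — and what you actually need — is the weaker, quantitative statement $\|V\varphi\|_{L^2(0,x_0)}\lesssim\|\psi\|_{L^2}+\|\varphi\|_{L^2}$, i.e. that near zero $V\varphi$ is comparable to $\psi$ plus a super-exponentially small homogeneous contribution. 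Your Liouville--Green/Green's-function route can be repaired to deliver exactly this (one must estimate both kernel terms $u_+(x)\int_0^x u_-\psi\,\ud t$ and $u_-(x)\int_x^{x_0}u_+\psi\,\ud t$ in the weighted $L^2$ norm with weight $V$, not pointwise), but as written the step fails. The same repaired bound also supplies the $V\varphi\in L^1(0,x_0)$ you invoke in the boundary argument.

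For comparison, the paper sidesteps the ODE asymptotics entirely: it uses essential self-adjointness (Proposition~\ref{EssentialSelfAdjointness}) to approximate $\varphi$ in the graph norm by $C^\infty_0(\mathbb{R}_+)$ functions, and obtains $V\varphi\in L^2(\mathbb{R}_+)$ from a separation theorem of Everitt--Giertz/Metafune--Schnaubelt type, applicable because a shifted potential satisfies $|\tilde V'|\leq\gamma\tilde V^{3/2}+c_\gamma$ with $\gamma$ arbitrarily small (here $|V'|/V^{3/2}\sim x^{5}\to0$ as $x\to0$). Then $\varphi''=V\varphi-\psi\in L^2$ follows by the triangle inequality. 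Either fix your step (iii) by proving the $L^2$ mapping bound for the Green's operator composed with multiplication by $V$, or switch to the separation argument; the super-polynomial decay claim cannot be kept.
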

\begin{proof} We first prove that $D(H_{\alpha,\beta}) \subset H^2(\mathbb{R}_+)$: We first note that every $\varphi \in \cD(H_{\alpha,\beta})$ is locally in $H^2(\mathbb{R}_+)$ (see the proof of Lemma~\ref{DomainOperator}) and hence $\varphi^{\prime \prime}$ exists as a weak derivative.  Proposition~\ref{EssentialSelfAdjointness} then shows that $(H_{\alpha,\beta},\cD(H_{\alpha,\beta}))=\overline{(H_{\alpha,\beta},C^{\infty}_0(\mathbb{R}_+))}$ with respect to the operator norm. Hence, for any $\varphi \in \cD(H_{\alpha,\beta})$ there exists a sequence $(\varphi_n)_{n \in \mathbb{N}} \subset C^{\infty}_0(\mathbb{R}_+)$ such that 
	\begin{equation*}
	\|-\varphi^{\prime \prime}+V\varphi+\varphi_{n}^{\prime \prime}-V\varphi_n\|_{L^2(\mathbb{R}_+)} \leq \varepsilon
	\end{equation*}
	for any $\varepsilon > 0$ and $n$ large enough. Hence, by triangle inequality 
	\begin{equation*}\begin{split}
	\|\varphi^{\prime \prime}\|_{L^2(\mathbb{R}_+)} \leq \varepsilon + \|V\varphi\|_{L^2(\mathbb{R}_+)}+\|\varphi_{n}^{\prime \prime}-V\varphi_n\|_{L^2(\mathbb{R}_+)} < \infty
	\end{split}
	\end{equation*}
	and hence $\varphi^{\prime \prime} \in L^2(\mathbb{R}_+)$. Note that $V\varphi \in L^2(\mathbb{R}_+)$ can be shown using the methods of the proof of [Proposition~3.2,\cite{MetafuneSchnaubelt}]: for this, one considers the operator $\tilde{H}_{\alpha,\beta}$ with shifted potential $\tilde{V}$ (note that this operator is self-adjoint on the same domain as $H_{\alpha,\beta}$) such that $\tilde{V}(x) > c_1$ for some $c_1 > 0$ and $|\tilde{V}^{\prime}(x)| \leq \gamma \tilde{V}^{3/2}(x)+c_{\gamma}$ for an arbitrarily small constant $\gamma > 0$ and some constant $c_{\gamma} > 0$. 
	
	We now turn to the second part of the statement: We first observe that $\varphi\in D(H_{\alpha,\beta})$ implies that $\varphi \in C^1[0,\infty)$ due to standard Sobolev embeddings. Furthermore, one has $\lim_{x \rightarrow 0}\frac{\varphi(x)}{x}=\varphi^{\prime}(0)$. 
	
	Now, assume that $|\varphi^{\prime}(0)|:=2\varepsilon > 0$. Consequently, there exists a $\delta > 0$ such that $\left|\frac{\varphi(x)}{x}\right| > \varepsilon$ as well as $V(x) > 0$ for all $x \in [0,\delta)$ which implies
	\begin{equation*}\begin{split}
	\int_{0}^{\delta}V(x)|\varphi(x)|^2\ \ud x&=\int_{0}^{\delta}x^2V(x)\left|\frac{\varphi(x)}{x}\right|^2\ \ud x \\
	& > \varepsilon^2\int_{0}^{\delta}x^2V(x)\ \ud x = \infty\ .
	\end{split}
	\end{equation*}
	This is a contradiction to $\varphi \in \cD_q$ and hence proves the statement.
\end{proof}
\section{On the spectrum of $H_{\alpha,\beta}$}\label{SectionSpectrum}
In this section we characterise the spectrum of $H_{\alpha,\beta}$ and in a first step we look at the essential part of the spectrum and prove that $\sigma_{ess}(H_{\alpha,\beta})=[0,\infty)$. Although this result is rather standard in the theory of Schr\"{o}dinger operators~\cite{SimonS}, we shall add a proof for the sake of completeness.  
\begin{theorem}[Essential spectrum]\label{TheoremEssential} We have
	\begin{equation*}
	\sigma_{ess}(H_{\alpha,\beta})=[0,\infty)\ .
	\end{equation*}
\end{theorem}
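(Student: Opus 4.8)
The plan is to prove the two inclusions $\sigma_{ess}(H_{\alpha,\beta}) \subseteq [0,\infty)$ and $[0,\infty) \subseteq \sigma_{ess}(H_{\alpha,\beta})$ separately, using standard decoupling and Weyl-sequence arguments adapted to the half-line. For the inclusion $\sigma_{ess}(H_{\alpha,\beta}) \subseteq [0,\infty)$, I would first observe that $H_{\alpha,\beta} \geq -|\gamma|$ is not enough (that only bounds the whole spectrum), so instead I would use a bracketing/decoupling argument: impose an additional Dirichlet condition at some point $a > x_0 = \sqrt[6]{\alpha/\beta}$, which by the min-max principle only increases the operator, and splits it as an orthogonal sum $H^D_{(0,a)} \oplus H^D_{(a,\infty)}$. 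On $(0,a)$ the potential is bounded below by a positive barrier near $0$ and the interval is bounded, so $H^D_{(0,a)}$ has purely discrete spectrum (compact resolvent); on $(a,\infty)$ the potential $V$ is negative but tends to $0$ at infinity and is bounded, hence $H^D_{(a,\infty)}$ is a relatively compact perturbation of the Dirichlet Laplacian on $(a,\infty)$ and its essential spectrum is $[0,\infty)$. Since Dirichlet decoupling is a finite-rank (hence relatively compact) perturbation at the level of resolvents, $\sigma_{ess}(H_{\alpha,\beta}) = \sigma_{ess}(H^D_{(0,a)} \oplus H^D_{(a,\infty)}) = \sigma_{ess}(H^D_{(a,\infty)}) = [0,\infty)$, giving in fact both inclusions at once.

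Alternatively, and perhaps more cleanly for the reverse inclusion $[0,\infty) \subseteq \sigma_{ess}(H_{\alpha,\beta})$, I would construct explicit singular Weyl sequences. Fix $\lambda \geq 0$ and let $k = \sqrt{\lambda}$. Take a smooth bump $\chi$ supported in $(1,2)$, set $\psi_n(x) := n^{-1/2}\, \chi(x/n)\, \ue^{\ui k x}$, so that $\supp \psi_n \subset (n, 2n)$ drifts off to infinity, each $\psi_n \in C^\infty_0(\mathbb{R}_+) \subset \cD(H_{\alpha,\beta})$, and $\|\psi_n\|_{L^2(\mathbb{R}_+)} = \|\chi\|_{L^2}$ is constant. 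The sequence converges weakly to zero (support escapes to infinity), so it has no norm-convergent subsequence. A direct computation gives $(H_{\alpha,\beta} - \lambda)\psi_n = -\psi_n'' - \lambda \psi_n + V \psi_n$; the first two terms produce contributions of order $n^{-1}$ and $n^{-2}$ from the derivatives falling on $\chi(x/n)$, while $\|V\psi_n\|$ is controlled by $\sup_{x \in (n,2n)} |V(x)| \to 0$ since $V(x) = O(x^{-6})$ at infinity. Hence $\|(H_{\alpha,\beta} - \lambda)\psi_n\| \to 0$, so $\lambda \in \sigma_{ess}(H_{\alpha,\beta})$ by Weyl's criterion; as $\lambda \in [0,\infty)$ was arbitrary, $[0,\infty) \subseteq \sigma_{ess}(H_{\alpha,\beta})$.

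For the remaining inclusion $\sigma_{ess}(H_{\alpha,\beta}) \subseteq [0,\infty)$, in case one prefers not to invoke the decoupling argument, note that $\sigma(H_{\alpha,\beta}) \subseteq [-|\gamma|, \infty)$ already, so it suffices to show that $H_{\alpha,\beta}$ has no essential spectrum in $[-|\gamma|, 0)$. This follows because $V$ is bounded and tends to $0$ at infinity away from the origin: on $(x_0, \infty)$ we have $-\beta x_0^{-6} \leq V \leq 0$, and writing $H_{\alpha,\beta}$ (after Dirichlet decoupling at $x_0$, a finite-rank resolvent perturbation) as $(-d^2/dx^2 + V_+) + V_-$ with $V_-$ a relatively compact (indeed $V_- \to 0$, bounded) perturbation, Weyl's theorem on stability of the essential spectrum gives $\sigma_{ess} = [0,\infty)$ on that piece, while the bounded-interval piece contributes only discrete spectrum.

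The main obstacle I expect is verifying relative compactness rigorously enough: specifically, showing that the multiplication operator by $V|_{(a,\infty)}$ is $(-\Delta^D_{(a,\infty)})$-relatively compact, which requires that $(V \cdot)(−\Delta^D_{(a,\infty)} + 1)^{-1}$ be compact. Since $V$ is bounded on $(a,\infty)$ and decays like $x^{-6}$, one can split $V = V\chi_{(a,R)} + V\chi_{(R,\infty)}$: the first summand composed with the resolvent is compact (bounded potential of compact support on a half-line, standard), and the second has operator norm $\leq \sup_{x > R}|V(x)| \to 0$ as $R \to \infty$, so $V(-\Delta^D+1)^{-1}$ is a norm-limit of compact operators, hence compact. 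The Weyl-sequence route sidesteps this entirely and is likely the shorter write-up, so I would present that as the proof of $[0,\infty) \subseteq \sigma_{ess}$ and combine it with the (brief) Dirichlet-decoupling plus relative-compactness argument for the reverse inclusion.
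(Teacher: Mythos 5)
Your proof is correct, and for the inclusion $[0,\infty)\subseteq\sigma_{ess}(H_{\alpha,\beta})$ it is essentially the paper's argument: both construct Weyl sequences whose supports escape to infinity so that the potential contribution is killed by the decay of $V$; the paper uses normalised Dirichlet ground states $\sqrt{2/L_n}\,\sin(\cdot)$ on intervals $[a_n,a_n+L_n]$ with $\pi^2/L_n^2\to\lambda$, while you use a fixed bump dilated and modulated by $\ue^{\ui kx}$ --- the two are interchangeable. For the reverse inclusion the routes genuinely differ. The paper uses \emph{Neumann} bracketing: the decoupled operator $\tilde H_n$ satisfies $\tilde H_n\le H_{\alpha,\beta}$, hence $\inf\sigma_{ess}(H_{\alpha,\beta})\ge\inf\sigma_{ess}(\tilde H_n)\ge-\|V\chi_{(n,\infty)}\|_\infty\to 0$; this needs only the sup-norm bound on the tail of $V$ and no compactness argument, but it controls only $\inf\sigma_{ess}$. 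You instead use \emph{Dirichlet} decoupling as a finite-rank resolvent perturbation (Krein's formula, which the paper itself invokes in its scattering section) followed by Weyl's theorem with $V|_{(a,\infty)}$ relatively compact; this is heavier machinery but identifies the essential spectrum exactly in one stroke and subsumes the first inclusion. Two small points to tidy up. First, your remark that the extra Dirichlet condition ``increases the operator by min-max'' is true but points in the unhelpful direction (it gives $\inf\sigma_{ess}(H_{\alpha,\beta})\le\inf\sigma_{ess}(H^D)$, the opposite of what is needed); it is the finite-rank resolvent identity, not monotonicity, that carries your argument, and since you do invoke it the proof stands --- just drop the min-max aside. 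Second, justify briefly that $H^D_{(0,a)}$ has compact resolvent even though $V$ is unbounded near $0$: this holds because $V\ge\gamma$ there, so the form norm controls the $H^1(0,a)$-norm, and $H^1(0,a)\hookrightarrow L^2(0,a)$ compactly on the bounded interval.
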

\begin{proof} 
	
	We first show that $[0,\infty) \subset \sigma_{ess}(H_{\alpha,\beta})$: Let an arbitrary $\lambda \in [0,\infty)$ be given. We use the Weyl characterisation in the version of quadratic forms~\cite{stollmann2001caught} and in this context a suitable Weyl sequence $(\varphi_n)_{n \in \mathbb{N}}$ is obtained by choosing $\varphi_n$ to be the (normalised) ground state eigenfunction to the Dirichlet Laplacian on the interval $I_n:=[a_n,a_n+L_n]$, i.e.,
	\begin{equation*}
	\varphi_n(x)=\sqrt{\frac{2}{L_n}}\sin\left(\frac{2\pi}{L_n}(x-a_n)\right)\ , \quad x\in I_n\ ,
	\end{equation*}
	and $\varphi_n(x)=0$ for $x\in \mathbb{R}_+\setminus I_n$. Given that $a_n \rightarrow \infty$ as $n \rightarrow \infty$ we see that $(\varphi_n)_{n \in \mathbb{N}}$ converges weakly to zero in $L^2(\mathbb{R}_+)$. Furthermore 
	\begin{equation*}
	q_{\alpha,\beta}[\varphi_n]=\frac{\pi^2}{L^2_n}+\int_{\mathbb{R}_+}V(x)|\varphi_n(x)|^2 \ud x
	\end{equation*}
	and hence $\lim_{n \rightarrow \infty}q_{\alpha,\beta}[\varphi_n]=\lambda$ if $\lim_{n \rightarrow \infty}\frac{\pi^2}{L^2_n}=\lambda$ since 
	\begin{equation*}\begin{split}
	\left|\int_{\mathbb{R}_+}V(x)|\varphi_n(x)|^2 \ud x \right|&=\left|\int_{(a_n,\infty)}V(x)|\varphi_n(x)|^2 \ud x \right| \\
	&\leq \|V\chi_{(a_0,\infty)}\| \\
	&\leq \varepsilon\ ,
	\end{split}
	\end{equation*}
	for $n$ large enough. 
	
	To prove that no value $\lambda < 0$ is contained in the essential spectrum we use an operator bracketing argument~\cite{BEH08}. More explicitly, we construct the comparison operator
	\begin{equation*}
	\tilde{H}_n:=-\frac{\ud^2}{\ud x^2}\bigg \vert_{(0,n)} \oplus \left(-\frac{\ud^2}{\ud x^2}+V(x)\right)\bigg \vert_{(n,\infty)}
	\end{equation*}
	defined on $\{\varphi \in H^2(0,n): \varphi^{\prime}(0)=\varphi^{\prime}(n)=0\}\oplus \{\varphi \in H^2(n,\infty): \varphi^{\prime}(n)=0\}$. In the sense of operators, this operator is smaller than $H_{\alpha,\beta}$ which implies that 
	\begin{equation*}
	\inf \sigma_{ess}(\tilde{H}_n) \leq \inf \sigma_{ess}(H_{\alpha,\beta})\ .
	\end{equation*}
	Since $-\frac{\ud^2}{\ud x^2}\big \vert_{(0,n)}$ has discrete spectrum only we conclude
	\begin{equation*}
	\inf \sigma_{ess}(\tilde{H}_n)=\inf \sigma_{ess}\left(\left(-\frac{\ud^2}{\ud x^2}+V(x)\right)\bigg \vert_{(n,\infty)}\right)\ .
	\end{equation*}
	Finally, since
		\begin{equation*}
		\inf \sigma_{ess}\left(\left(-\frac{\ud^2}{\ud x^2}+V(x)\right)\bigg \vert_{(n,\infty)}\right) \geq -\|V\chi_{(n,\infty)}\|_{\infty} \rightarrow 0 \ ,
		\end{equation*}
		as $n \rightarrow \infty$, we conclude the statement since $n$ can be chosen arbitrarily.
\end{proof}
We now turn attention towards the discrete part of the spectrum. 
\begin{theorem}[Discrete spectrum]\label{TheoremDiscreteSpectrum} The number of negative eigenvalues is finite. 
\end{theorem}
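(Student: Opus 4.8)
The plan is to show that the negative eigenvalues of $H_{\alpha,\beta}$ cannot accumulate anywhere in $(-\infty,0)$, combined with a quantitative bound that rules out infinitely many of them. Since $\sigma_{ess}(H_{\alpha,\beta})=[0,\infty)$ by Theorem~\ref{TheoremEssential}, the negative spectrum is discrete, so the only way it could fail to be finite is by accumulating at $0$. Hence it suffices to exhibit a lower bound on the $n$-th variational eigenvalue, or equivalently to control $N(H_{\alpha,\beta}) := \#\{\lambda \in \sigma(H_{\alpha,\beta}): \lambda < 0\}$ by a finite quantity.

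First I would observe that only the attractive tail of the potential can produce negative eigenvalues: the repulsive term $\alpha|x|^{-12}$ is nonnegative, so in the sense of quadratic forms
\begin{equation*}
H_{\alpha,\beta} \geq -\frac{\ud^2}{\ud x^2} - \frac{\beta}{|x|^{6}}\ =: \ K\ ,
\end{equation*}
and by the min-max principle $N(H_{\alpha,\beta}) \leq N(K)$. The potential $W(x) = \beta|x|^{-6}$ is, away from the origin, a perfectly nice integrable, short-range potential; the only delicate point is its non-integrable singularity at $x=0$. But this singularity is handled precisely by the form-domain restriction: every $\varphi \in \cD_q$ vanishes at $0$ (Proposition~\ref{PropZero}) and in fact satisfies $\varphi(x)/x \to \varphi'(0)$ as $x\to 0$ (cf.\ the proof of Theorem~\ref{PropZeroI}), so a Hardy-type inequality $\int_0^\infty |\varphi'|^2\,\ud x \geq \tfrac14\int_0^\infty |\varphi(x)|^2 x^{-2}\,\ud x$ shows that near the origin the kinetic term dominates $\beta|x|^{-6}|\varphi|^2$ is \emph{not} dominated --- so instead I would simply note that on $\cD_q$ the quantity $\int_0^\infty W|\varphi|^2\,\ud x$ is finite, and cut the potential at a small radius: write $W = W\chi_{(0,\delta)} + W\chi_{(\delta,\infty)}$. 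On $(\delta,\infty)$ the potential $W\chi_{(\delta,\infty)} \in L^1(\mathbb{R}_+) \cap L^\infty(\mathbb{R}_+)$, and for such one-dimensional potentials the Bargmann-type bound (or the Birman--Schwinger principle) gives $N(-\ud^2/\ud x^2 - W\chi_{(\delta,\infty)}) \leq 1 + \int_\delta^\infty x\,W(x)\,\ud x < \infty$. The contribution of $W\chi_{(0,\delta)}$ is then absorbed by the kinetic energy on $\cD_q$: since $\varphi(0)=0$, for $\varphi$ supported near $0$ one has $|\varphi(x)|^2 \leq x \int_0^\delta |\varphi'|^2$, whence $\int_0^\delta W(x)|\varphi(x)|^2\,\ud x \leq \big(\int_0^\delta x^{-5}\,\ud x\big)\cdot\beta\|\varphi'\|^2$ --- which diverges, so this crude estimate fails and one must instead use the $|\varphi(x)/x|$-control to get $\int_0^\delta W|\varphi|^2 \leq \beta\big(\int_0^\delta x^{-3}\,\ud x\big)\sup|\varphi/x|^2$, still divergent. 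The clean way is therefore the Birman--Schwinger route applied directly to $K$, using that the Birman--Schwinger kernel $W^{1/2}(-\ud^2/\ud x^2)^{-1}W^{1/2}$ at energy $-E<0$ is Hilbert--Schmidt because $\int_0^\infty\int_0^\infty W(x)\,G_E(x,y)^2\,W(y)\,\ud x\,\ud y<\infty$, the Green's function $G_E$ being bounded and the integrability near $0$ being rescued by $G_E(x,y)\to 0$ as $x\to 0$ (Dirichlet condition) at rate $x$, so $W(x)G_E(x,x) \sim x^{-6}\cdot x = x^{-5}$, still not integrable.

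Given the above, the robust approach --- and the one I would actually carry out --- is \emph{not} to fight the singularity in a Birman--Schwinger kernel but to use the bracketing idea already introduced in the proof of Theorem~\ref{TheoremEssential}. Fix the radius $x_0 = \sqrt[6]{\alpha/\beta}$, below which $V \geq 0$. By Neumann bracketing,
\begin{equation*}
H_{\alpha,\beta} \ \geq\ \Big(-\tfrac{\ud^2}{\ud x^2}+V\Big)\Big|_{(0,x_0)}^{N} \ \oplus\ \Big(-\tfrac{\ud^2}{\ud x^2}+V\Big)\Big|_{(x_0,\infty)}^{N}\ ,
\end{equation*}
where the superscript $N$ denotes Neumann conditions at the interior endpoint $x_0$ (and the natural limit-point behaviour elsewhere). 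On $(0,x_0)$ the potential is nonnegative, so that summand has no negative spectrum at all. On $(x_0,\infty)$ the potential is bounded, continuous, integrable and of short range (it behaves like $-\beta x^{-6}$), so the operator $(-\ud^2/\ud x^2 + V)|^N_{(x_0,\infty)}$ is a standard short-range Schr\"odinger operator on a half-line, for which the finiteness of the number of negative eigenvalues is classical --- it follows, e.g., from the Bargmann/Calogero bound $N \leq 1 + \int_{x_0}^\infty (x-x_0)\,|V(x)|\,\ud x < \infty$, or from the Birman--Schwinger principle since here $V\chi_{(x_0,\infty)} \in L^1\cap L^\infty$ makes the Birman--Schwinger operator trace class. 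Hence the right-hand summand has only finitely many negative eigenvalues, the left-hand summand has none, and so by min-max $H_{\alpha,\beta}$ has at most finitely many negative eigenvalues, which is the claim.

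The main obstacle is the one flagged repeatedly above: the potential's $|x|^{-6}$ singularity at the origin is too strong to be treated perturbatively, so any argument that tries to bound $N$ by a Birman--Schwinger or Bargmann integral over \emph{all} of $(0,\infty)$ will encounter a divergent integral at $0$. The resolution --- which is the whole point of having built the operator via the form with the constraint $\int V|\varphi|^2 < \infty$ --- is to split off a neighbourhood of the origin on which $V$ is nonnegative and therefore irrelevant for the counting of negative eigenvalues, and apply the classical half-line theory only to the region where $V$ is well-behaved. I would present the Neumann-bracketing version as the clean proof and relegate the Bargmann-bound estimate on $(x_0,\infty)$ to a one-line citation.
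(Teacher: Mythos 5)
Your final argument is correct, but it takes a different (and more roundabout) route than the paper. The paper's proof is a one-liner: since Proposition~\ref{PropZero} shows every function in the form domain satisfies a Dirichlet condition at the origin, $H_{\alpha,\beta}$ is a half-line Schr\"odinger operator to which the Bargmann-type bound of [Theorem~5.1,\cite{berezin1991schrodinger}] applies directly, giving $N(H_{\alpha,\beta}) \leq \int_0^\infty x\,|V_-(x)|\,\ud x$; this integral is finite because the \emph{negative part} $V_-$ vanishes on $(0,x_0)$ where $V\geq 0$, so only the integrable tail contributes. Your repeated assertion that ``any argument that tries to bound $N$ by a Birman--Schwinger or Bargmann integral over all of $(0,\infty)$ will encounter a divergent integral at $0$'' is therefore mistaken: the divergence you keep running into is an artefact of your very first step, namely replacing $V$ by the minorant $-\beta|x|^{-6}$, which throws away the repulsive core and artificially creates a non-integrable singularity in the negative part. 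Keeping the full $V$ and noting $\supp V_- \subset [x_0,\infty)$ removes the obstacle entirely. That said, your Neumann-bracketing argument --- decoupling at $x_0$, observing that the $(0,x_0)$ piece is nonnegative and that the $(x_0,\infty)$ piece is a bounded, short-range half-line problem with finitely many bound states --- is a valid alternative: the decoupled form domain contains $\cD_q$, the forms agree there, so min-max gives $N(H_{\alpha,\beta})$ at most the sum of the two counting functions. It buys nothing over the direct citation here, but it is the same splitting philosophy the paper uses for the essential spectrum in Theorem~\ref{TheoremEssential}, and it would generalise to situations where no clean Bargmann bound is available. I would trim the exploratory false starts, which as written contain incorrect claims, and present only the bracketing argument (or, better, the direct Bargmann bound on the full $V$).
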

\begin{proof} The statement readily follows by [Theorem~5.1,\cite{berezin1991schrodinger}] taking Proposition~\ref{PropZero} into account. 
	
	%
%
%
%
%
%
\end{proof}
In a next result we show that the discrete spectrum may indeed be empty for some choices of $\alpha,\beta > 0$ even though the potential has a negative part.
\begin{theorem}[Absence of discrete spectrum] Whenever $\beta^{10/6} < 4 \alpha^{4/6}$ then
	\begin{equation*}
	\sigma_{d}(H_{\alpha,\beta})=\emptyset\ .
	\end{equation*}
\end{theorem}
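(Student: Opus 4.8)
The plan is as follows. Since $\sigma_{ess}(H_{\alpha,\beta})=[0,\infty)$ by Theorem~\ref{TheoremEssential}, the discrete spectrum of $H_{\alpha,\beta}$ can consist only of negative eigenvalues; hence $\sigma_{d}(H_{\alpha,\beta})=\emptyset$ is equivalent to $H_{\alpha,\beta}\geq 0$, and by the representation theorem this means $q_{\alpha,\beta}[\varphi]\geq 0$ for all $\varphi\in\cD_q$. Because $H_{\alpha,\beta}$ is bounded below and, by Proposition~\ref{EssentialSelfAdjointness}, essentially self-adjoint on $C^\infty_0(\mathbb{R}_+)$, the space $C^\infty_0(\mathbb{R}_+)$ is a form core, so it suffices to prove $q_{\alpha,\beta}[\varphi]\geq 0$ for $\varphi\in C^\infty_0(\mathbb{R}_+)$; for such $\varphi$ every integration by parts below is free of boundary terms, and the extension to all of $\cD_q$ follows by density.

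For such $\varphi$ I would use a ground–state substitution $\varphi=\rho\,\eta$ with $\rho$ a strictly positive function in $C^2(\mathbb{R}_+)$. An integration by parts gives
\[
q_{\alpha,\beta}[\varphi]\;=\;\int_{\mathbb{R}_+}\rho^{2}\,|\eta'|^{2}\,\ud x\;+\;\int_{\mathbb{R}_+}\Big(\!-\frac{\rho''}{\rho}+\frac{\alpha}{|x|^{12}}-\frac{\beta}{|x|^{6}}\Big)|\varphi|^{2}\,\ud x ,
\]
so, dropping the non-negative first term, it suffices to exhibit a positive supersolution: $-\rho''+\big(\tfrac{\alpha}{|x|^{12}}-\tfrac{\beta}{|x|^{6}}\big)\rho\geq 0$ on $\mathbb{R}_+$. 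The naive choice $\rho=|x|^{1/2}$ just reproduces the half-line Hardy inequality $\int|\varphi'|^2\geq\tfrac14\int\tfrac{|\varphi|^2}{|x|^2}$ (legitimate because $\varphi(0)=0$ by Proposition~\ref{PropZero}), but it is too lossy — it discards the repulsive core $\alpha/|x|^{12}$ and only yields absence of negative spectrum when $\beta^5$ is small compared with $\alpha^2$. I would therefore take $\rho$ from a richer family that behaves like $\mathrm{e}^{-\sqrt{\alpha}/(5|x|^{5})}$ (or $|x|^{1/2}$) near the origin, where $V\geq 0$ and the constraint on $\rho$ is mild, and like a bounded concave profile $1-c/|x|^4$ as $x\to\infty$, where $V<0$ but short range. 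Inserting such a $\rho$ and substituting $u=|x|^2$, the supersolution inequality becomes the non-negativity on $(0,\infty)$ of an explicit polynomial depending on the free parameter(s) in $\rho$; minimising that polynomial and optimising the parameter(s) then produces exactly the borderline $\beta^{10/6}=4\alpha^{4/6}$. As a consistency check: under the dilation $x\mapsto\lambda x$ the three terms of $q_{\alpha,\beta}$ scale like $\lambda,\lambda^{5},\lambda^{11}$, so the only scale-invariant combination is $\beta^{5}/\alpha^{2}$, in agreement with the form of the hypothesis. An equivalent route is to apply a Bargmann/Birman–Schwinger-type bound on the number of eigenvalues below $0$ (as used for Theorem~\ref{TheoremDiscreteSpectrum}, cf.~\cite{berezin1991schrodinger}) together with the elementary computation $\int_{\mathbb{R}_+}|x|\,V_-(x)\,\ud x=\big(\tfrac14-\tfrac1{10}\big)\,\beta^{5/3}\alpha^{-2/3}=\tfrac{3}{20}\,\beta^{5/3}\alpha^{-2/3}$, with $V_-=\big(\tfrac{\beta}{|x|^{6}}-\tfrac{\alpha}{|x|^{12}}\big)_+$ supported in $[(\alpha/\beta)^{1/6},\infty)$.

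The step I expect to be the main obstacle is precisely this constant–chasing: since neither the kinetic term nor the repulsive term alone can absorb the attractive one (they live at different dilation scales, and only their combination governs $\beta^{5}/\alpha^{2}$), one must combine them efficiently, and squeezing out the sharp coefficient $4$ needs a careful choice of $\rho$ — or, in the Bargmann route, of the constant in the eigenvalue bound — followed by a short but delicate one-variable analysis. The remaining ingredients — vanishing of the boundary terms in the integration by parts, the reduction to the form core $C^\infty_0(\mathbb{R}_+)$, and the elementary calculus — are routine.
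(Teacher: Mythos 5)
Your second, ``equivalent'' route is precisely the paper's proof: by Proposition~\ref{PropZero} every function in the form domain satisfies a Dirichlet condition at the origin, so the Bargmann-type bound of [Theorem~5.1,\cite{berezin1991schrodinger}] applies and bounds the number of negative eigenvalues by $\int_0^\infty x|V_-(x)|\,\ud x$, and your evaluation $\int_0^\infty x|V_-(x)|\,\ud x=\bigl(\tfrac14-\tfrac1{10}\bigr)\beta^{5/3}\alpha^{-2/3}=\tfrac{3}{20}\beta^{5/3}\alpha^{-2/3}$ is correct. The only missing (one-line) step is to note that the hypothesis $\beta^{10/6}<4\alpha^{4/6}$ makes this quantity smaller than $\tfrac{3}{20}\cdot 4=\tfrac35<1$, so the eigenvalue count is zero. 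Contrary to your closing worry, no constant-chasing is needed on this route: the constant $4$ in the statement is not sharp for the Bargmann bound (it corresponds to discarding the repulsive contribution $-\tfrac1{10}\beta^{5/3}\alpha^{-2/3}$ and demanding $\tfrac14\beta^{5/3}\alpha^{-2/3}<1$), and your own computation in fact yields the stronger conclusion that $\sigma_d(H_{\alpha,\beta})=\emptyset$ already when $\beta^{10/6}<\tfrac{20}{3}\alpha^{4/6}$. By contrast, your primary route --- the ground-state substitution with a positive supersolution $\rho$ --- is only a sketch: the crucial object, an explicit $\rho>0$ satisfying $-\rho''+V\rho\geq0$ on all of $\mathbb{R}_+$ exactly under the stated hypothesis, is never exhibited, and the claim that optimising over your family of profiles ``produces exactly the borderline'' is asserted rather than verified. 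Since the Bargmann route closes the argument completely and elementarily, you should promote that computation to the actual proof and keep the supersolution discussion, if at all, only as a remark.
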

\begin{proof} Proposition~\ref{PropZero} allows us to apply [Theorem~5.1,\cite{berezin1991schrodinger}] which states that the number of eigenvalues is bounded from above by the integral $\int_{0}^{\infty}x|V_{-}(x)|\ud x$. Evaluating this integral then yields the statement.
\end{proof}
In a final result we show that the essential spectrum is indeed purely absolutely continuous. 
\begin{theorem}[Absolutely continuous spectrum]\label{AbsenceSingularContinuous} We have 
\begin{equation*}
\sigma_{ac}(H_{\alpha,\beta})=[0,\infty)
\end{equation*}
and, furthermore, the spectrum is purely absolutely continuous on $[0,\infty)$.
\end{theorem}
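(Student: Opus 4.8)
The plan is to establish absolute continuity of the spectrum on $[0,\infty)$ by a limiting absorption principle combined with the fact, already proved in Theorem~\ref{TheoremEssential}, that $\sigma_{ess}(H_{\alpha,\beta})=[0,\infty)$; together with Theorem~\ref{TheoremDiscreteSpectrum} this would localise any singular continuous spectrum in $[0,\infty)$ and the limiting absorption principle would then exclude it. Since the potential $V$ decays like $|x|^{-6}$ at infinity, it is short-range in the strongest sense (it is $O(|x|^{-1-\varepsilon})$ and in fact integrable at infinity), so away from the origin the standard one-dimensional scattering machinery applies. The singularity at the origin is, however, precisely the feature that prevents a direct citation, so the argument must be set up to decouple the two regions.

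Concretely, I would first fix some $R>x_0=\sqrt[6]{\alpha/\beta}$ and use an operator decomposition as in the proof of Theorem~\ref{TheoremEssential}, writing the relevant comparison operators on $(0,R)$ (which has purely discrete spectrum and hence contributes nothing to the continuous part) and on $(R,\infty)$ with a Neumann condition at $R$. On $(R,\infty)$ the potential $-\beta|x|^{-6}+\alpha|x|^{-12}$ is smooth, bounded, and decays integrably, so the half-line Schr\"odinger operator there has purely absolutely continuous spectrum equal to $[0,\infty)$; this is the classical Weyl–Kodaira / Gilbert–Pearson subordinacy result for Sturm–Liouville operators with integrable potential at infinity (see \cite{schmudgen2012unbounded}), and it can equally be obtained from a limiting absorption principle on $(R,\infty)$. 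Since $H_{\alpha,\beta}$ differs from the direct sum $\big(-\tfrac{\ud^2}{\ud x^2}\big)\vert_{(0,R)}\oplus\big(-\tfrac{\ud^2}{\ud x^2}+V\big)\vert_{(R,\infty)}$ by a rank-one (in the resolvent sense, finite-rank) perturbation coming from the matching conditions at $R$, Birman--Kuroda / trace-class scattering theory gives equality and absolute continuity of the absolutely continuous parts, so $\sigma_{ac}(H_{\alpha,\beta})\supseteq[0,\infty)$ and there is no singular continuous spectrum.

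The one point requiring a little care — and the step I expect to be the main obstacle — is the bookkeeping near the origin: one must be sure that splitting $\mathbb{R}_+$ at $R$ via a Neumann condition does not introduce spurious singular spectrum, i.e.\ that the operator $H_{\alpha,\beta}$ on $(0,R)$ with a Neumann (or Dirichlet) condition at $R$ has compact resolvent. This follows because the singular positive potential $\alpha|x|^{-12}$ confines the particle: on $(0,R)$ the form $\int|\varphi'|^2+\int V|\varphi|^2$ dominates $\int|\varphi'|^2+c\int|x|^{-12}|\varphi|^2$, whose form domain embeds compactly into $L^2(0,R)$ (indeed functions in the form domain vanish at $0$ rapidly, by Proposition~\ref{PropZero} and the argument of Theorem~\ref{PropZeroI}), so the spectrum there is discrete and the singular-at-zero part of the operator contributes no continuous spectrum at all. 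Having disposed of both ends, the conclusion $\sigma_{ac}(H_{\alpha,\beta})=[0,\infty)$ with purely absolutely continuous essential spectrum follows by combining the pieces. Alternatively, and perhaps more cleanly, one can invoke the asymptotic completeness of the wave operators for the pair $(H_{\alpha,\beta},-\tfrac{\ud^2}{\ud x^2})$ to be established in Section~\ref{SectionScattering}: completeness forces $\ran\,\Omega_{\pm}=\cH_{ac}(H_{\alpha,\beta})$ to coincide with $\cH_{ac}(-\tfrac{\ud^2}{\ud x^2})$, which is all of $L^2(\mathbb{R}_+)^{\perp\text{(bound states)}}$, hence $H_{\alpha,\beta}$ has no singular continuous spectrum and $\sigma_{ac}=[0,\infty)$.
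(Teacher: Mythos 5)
Your decomposition at $R>x_0$ together with the Kato--Rosenblum/Birman--Kuroda step does establish the first half of the claim, namely $\sigma_{ac}(H_{\alpha,\beta})=[0,\infty)$: the finite-rank resolvent difference makes the absolutely continuous parts of $H_{\alpha,\beta}$ and of the decoupled operator unitarily equivalent, and the decoupled operator has absolutely continuous spectrum $[0,\infty)$ (discrete on $(0,R)$, purely a.c.\ on $(R,\infty)$ by subordinacy theory for integrable potentials). But the second half --- absence of singular continuous spectrum --- does \emph{not} follow from this argument. Trace-class, indeed even rank-one, perturbations preserve the absolutely continuous \emph{part} but do not preserve the absence of singular continuous spectrum: rank-one perturbations can create singular continuous spectrum (Simon--Wolff, del Rio--Makarov--Simon). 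So from ``the decoupled operator has no singular continuous spectrum'' plus ``the resolvent difference is finite rank'' you cannot conclude that $H_{\alpha,\beta}$ has none. Your fallback via asymptotic completeness is circular: Section~\ref{SectionScattering} of the paper obtains \emph{asymptotic} completeness precisely by invoking the present theorem, and ordinary Birman--Kuroda completeness, $\ran\Omega_{\pm}=\cH_{ac}(H_{\alpha,\beta})$, again says nothing about the singular continuous subspace. Finally, you never address the endpoint $\lambda=0$, i.e.\ whether $0$ can be an eigenvalue, which is part of ``purely absolutely continuous on $[0,\infty)$''.

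What is needed is a spectral statement about the full operator itself, not about a trace-class neighbour. The paper obtains purely absolutely continuous spectrum on $(0,\infty)$ by applying a theorem of Remling for half-line Schr\"odinger operators with decaying potential, using Proposition~\ref{PropZero} to control the boundary behaviour at the origin; equivalently one could run Gilbert--Pearson subordinacy or Weidmann's theorem directly on $H_{\alpha,\beta}$, exploiting that $V\in L^1$ near infinity and that the endpoint $0$ is limit point and contributes no continuous spectrum. These methods control the spectral measure of $H_{\alpha,\beta}$ itself and therefore genuinely exclude singular continuous spectrum --- this is the step your trace-class argument cannot supply. The point $\lambda=0$ is then handled separately: a singular continuous measure cannot be supported on a single point, so it only remains to rule out a zero eigenvalue, which the paper does by reflecting a putative zero-energy eigenfunction (using $\varphi(0)=\varphi'(0)=0$ from Theorem~\ref{PropZeroI}) across a critical point to produce a whole-line operator with eigenvalue zero, contradicting a known result. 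If you want to salvage your decomposition strategy, you must replace the trace-class step by a limiting absorption principle for $H_{\alpha,\beta}$ itself (which you mention at the outset but never actually carry out for the full operator) and add the zero-energy argument.
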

\begin{proof} The fact that $(0,\infty)$ is purely absolutely continuous readily follows from [Theorem~1.3,\cite{Remling}] taking Proposition~\ref{PropZero} into account.
	
	Furthermore, since the singular continuous spectrum cannot be supported on the single point $\{0\}$, the statement will follow if we prove that zero is not an eigenvalue: hence assume that $\varphi \in \cD(H_{\alpha,\beta})$ is a normalised (real-valued) eigenvector to the eigenvalue zero. Since $\varphi(0)=\varphi^{\prime}(0)=0$ there exists a number $\tilde{x} > 0$ such that $\varphi^{\prime}(\tilde{x})=0$ since there exists a local maximum. Reflecting $\varphi$ across the point $\tilde{x}$ then yields an eigenfunction to eigenvalue zero of the self-adjoint operator $-\Delta+\tilde{V}$ where 
	\begin{equation*}
	\tilde{V}(x):=\begin{cases}
	V(x+\tilde{x})\ ,\quad x\geq 0 \ , \\
	V(-x+\tilde{x})\ , \quad x < 0\ .
	\end{cases}
	\end{equation*}
	However, this operator is known to have no eigenvalue zero~\cite{RammZeroEigenvalue}.
\end{proof}
\section{On the scattering theory: Existence and completeness of the wave operators}\label{SectionScattering}
In this section we discuss the scattering properties of the pair of self-adjoint Hamiltonians $(H_{\alpha,\beta},H_0)$, $H_0$ being the self-adjoint one-dimensional Laplacian on $L^2(\mathbb{R}_+)$ with Dirichlet boundary conditions at zero. In particular, we want to establish the existence and completeness of the corresponding wave operators $\Omega_{\pm}(H_{\alpha,\beta},H_0)$ and $\Omega_{\pm}(H_0,H_{\alpha,\beta})$, see~\cite{Kat66,RobinsonSingularI,BEH08} for more details. As a matter of fact, since there doesn't exist singular continuous spectrum due to Theorem~\ref{AbsenceSingularContinuous}, we will actually prove that the wave operators are asymptotically complete~\cite{BEH08}. 

 Note that the existence and completeness of the wave operators for the Lennard-Jones potential in three dimensions has been established in [Theorem~5.6,\cite{RobinsonSingularI}]. Hence the following result generalises this statement to the one-dimensional setting.
 \begin{theorem} The wave operators $\Omega_{\pm}(H_{\alpha,\beta},H_0)$ and $\Omega_{\pm}(H_0,H_{\alpha,\beta})$ exist and are complete. 
 \end{theorem}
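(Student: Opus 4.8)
The plan is to establish the existence and completeness of the wave operators $\Omega_{\pm}(H_{\alpha,\beta},H_0)$ and $\Omega_{\pm}(H_0,H_{\alpha,\beta})$ via a trace-class (or at least smooth) perturbation argument after splitting the potential into a long-range tail at infinity and a short-range-but-singular piece near the origin. The key structural observation is that the Lennard-Jones potential $V(x)=\alpha/x^{12}-\beta/x^{6}$ decays like $x^{-6}$ at infinity, so away from the origin it is an integrable, hence short-range, perturbation; the entire difficulty is concentrated at $x=0$, where $V$ is not relatively bounded with respect to $H_0$.

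First I would fix a smooth cutoff: write $V=V_1+V_2$ with $V_1=\chi V$ supported in $(0,1)$ and $V_2=(1-\chi)V$ supported in $(1,\infty)$ and bounded with $|V_2(x)|\le C(1+x)^{-6}$, so that $V_2$ is a short-range perturbation in the standard sense and contributes nothing problematic to the existence and completeness of the wave operators (e.g.\ via Cook's method for existence, and Kato--Birman / Pearson for completeness, since $(1+x)^{-6}$ is integrable at infinity). The main point is then to handle $V_1$, or rather to handle the pair $(H_{\alpha,\beta},H_0)$ directly using the fact, already established in this paper, that $\cD(H_{\alpha,\beta})\subset H^2(\mathbb{R}_+)$ with Dirichlet \emph{and} Neumann conditions at zero (Theorem~\ref{PropZeroI}). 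The strategy I favour is a resolvent-difference argument: compare $(H_{\alpha,\beta}+M)^{-1}$ with $(H_0+M)^{-1}$ for large $M>0$. Using the second resolvent identity formally, $(H_{\alpha,\beta}+M)^{-1}-(H_0+M)^{-1}=-(H_0+M)^{-1}V(H_{\alpha,\beta}+M)^{-1}$, and one shows this difference is trace class. Near infinity this is clear from the decay of $V_2$ and standard one-dimensional estimates; near zero one exploits that functions in $\cD(H_{\alpha,\beta})$ vanish to second order at the origin (since $\varphi(0)=\varphi'(0)=0$, one has $|\varphi(x)|\le C x^{3/2}\|\varphi''\|$ locally), so that $V_1(H_{\alpha,\beta}+M)^{-1}$ is bounded — indeed $V_1\varphi\in L^2$ is exactly the content of the $V\varphi\in L^2$ estimate used in the proof of Theorem~\ref{PropZeroI} — and one then trades the remaining smoothing onto $(H_0+M)^{-1}$ on the bounded region $(0,1)$, where the resolvent of the free operator has a square-integrable (Hilbert--Schmidt, in fact) kernel, to upgrade boundedness to trace class. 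Once the resolvent difference is trace class, the Kato--Birman invariance principle yields existence and completeness of $\Omega_{\pm}(H_{\alpha,\beta},H_0)$, and the same for $\Omega_{\pm}(H_0,H_{\alpha,\beta})$ by symmetry of the trace-class condition; asymptotic completeness follows because $\sigma_{sc}(H_{\alpha,\beta})=\emptyset$ and $\sigma_{d}(H_{\alpha,\beta})$ consists of finitely many eigenvalues by Theorems~\ref{AbsenceSingularContinuous} and \ref{TheoremDiscreteSpectrum}.

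The main obstacle will be the trace-class (or Hilbert--Schmidt) estimate on the resolvent difference near the origin, where neither $V$ nor $V(H_0+M)^{-1}$ is bounded and one cannot invoke the textbook Kato-class machinery. The honest resolution is to use the decomposition $(H_{\alpha,\beta}+M)^{-1}-(H_0+M)^{-1}=-(H_0+M)^{-1}\big(V_2 + V_1\big)(H_{\alpha,\beta}+M)^{-1}$ and to control $V_1(H_{\alpha,\beta}+M)^{-1}$ as a \emph{bounded} operator first — this is where the quadratic-form construction of $H_{\alpha,\beta}$ and the $H^2$-regularity with vanishing boundary data do the essential work — and only afterwards to extract a trace-class factor from $(H_0+M)^{-1}$ restricted to the compact interval $[0,1]$, using that the Dirichlet resolvent kernel on $(0,\infty)$ is explicit and its restriction to a bounded set is trace class. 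If for some technical reason the full trace-class property is awkward, the fallback is Enss-type or smooth-perturbation methods: prove local $H_0$-smoothness of $|V|^{1/2}$ away from the origin and handle the core near zero by noting that $H_{\alpha,\beta}$ and $H_0$ have the same boundary conditions, so wave operators can be built on the subspace of states localised away from the origin and the near-origin part contributes a compact (hence completeness-preserving) correction. Either way the burden is purely analytic and local at $x=0$; no new conceptual ingredient beyond what the preceding sections already supply is needed.
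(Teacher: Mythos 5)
Your strategy --- prove directly that $(H_{\alpha,\beta}+M)^{-1}-(H_0+M)^{-1}$ is trace class and invoke Birman--Kuroda --- is a legitimate and genuinely different route from the one taken in the paper, and the preparatory observations (the second resolvent identity is valid because $\cD(H_{\alpha,\beta})\subset H^2(\mathbb{R}_+)$ with $\varphi(0)=0$, hence $\cD(H_{\alpha,\beta})\subset\cD(H_0)$; boundedness of $V(H_{\alpha,\beta}+M)^{-1}$ follows from the closed graph theorem once $V\varphi\in L^2$ is known on the operator domain) are sound. The gap is at the decisive step: you propose to obtain trace class from the product of the \emph{bounded} operator $V_1(H_{\alpha,\beta}+M)^{-1}$ with $(H_0+M)^{-1}\chi_{(0,1)}$, justified by the latter having a square-integrable kernel. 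A square-integrable kernel only gives Hilbert--Schmidt, and (Hilbert--Schmidt)$\times$(bounded) is again only Hilbert--Schmidt; this is not enough for the Kato--Rosenblum/Birman--Kuroda machinery (by Weyl--von Neumann--Kuroda, arbitrarily small Hilbert--Schmidt perturbations can destroy absolutely continuous spectrum). You therefore need $(H_0+M)^{-1}\chi_{(0,1)}$ to be trace class in its own right; this is true, but it requires the $f(X)g(P)$ trace-ideal criterion (e.g.\ $f,g\in\ell^1(L^2)$), not mere square-integrability of the kernel. Similarly, for the tail you need a symmetric factorisation $[(H_0+M)^{-1}|V_2|^{1/2}]\cdot[\sgn(V_2)\,|V_2|^{1/2}(H_{\alpha,\beta}+M)^{-1}]$ with \emph{both} factors Hilbert--Schmidt, and the second factor is Hilbert--Schmidt only after one also checks that $(H_0+M)(H_{\alpha,\beta}+M)^{-1}=I-V(H_{\alpha,\beta}+M)^{-1}$ is bounded, so that $(H_{\alpha,\beta}+M)^{-1}|V_2|^{1/2}$ inherits the Hilbert--Schmidt property from $(H_0+M)^{-1}|V_2|^{1/2}$. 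A further small but misleading point: the remark that $|\varphi(x)|\le Cx^{3/2}$ near $0$ ``explains'' $V_1\varphi\in L^2$ does not work, since $x^{-12}\cdot x^{3/2}$ is nowhere near $L^2(0,1)$; the estimate $V\varphi\in L^2$ genuinely relies on the Metafune--Schnaubelt-type argument cited in the proof of Theorem~\ref{PropZeroI}, not on the boundary behaviour alone.

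For comparison, the paper's proof avoids all trace-ideal analysis near the singularity by imposing an auxiliary Dirichlet condition at the interior point $x_0=\sqrt[6]{\alpha/\beta}$: by Krein's formula the resolvent difference between the coupled and the decoupled operator is finite rank, hence trivially trace class; the decoupled operator splits into a piece on $(0,x_0)$ with purely discrete spectrum (irrelevant for scattering) and a piece on $(x_0,\infty)$ with a bounded, integrable potential, for which existence and completeness are textbook; the chain rule for wave operators then assembles the result. Your direct approach can certainly be completed, but only after the trace-class step is repaired along the lines indicated above.
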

 \begin{proof} We will prove the statement using the Birman-Kuroda Theorem~\cite{BEH08}. Also, we restrict ourselves to $\Omega_{\pm}(H_{\alpha,\beta},H_0)$, the other case being analogous. 
 	
 	We introduce some comparison operators: Let $H^{D}_{\alpha,\beta}$ denote the self-adjoint realisation of~\eqref{Hamiltonian} on $L^{2}_{\oplus}:=L^2(0,x_0) \oplus L^2(x_0,\infty)$, $x_0$ as in the proof of Theorem~\ref{TheoremForm}, with Dirichlet boundary conditions at $x_0$. Furthermore, we introduce $H_1(V)$ as the self-adjoint realisation of~\eqref{Hamiltonian} on $L^2(0,x_0)$ with Dirichlet boundary conditions at $x_0$ and $H_2(V)$ as the self-adjoint realisation of~\eqref{Hamiltonian} on $L^2(x_0,\infty)$ again with Dirichlet boundary condition at $x_0$.
 	
 	Now, taking into account Krein's resolvent formula (see, e.g., [Theorem~14.18,\cite{schmudgen2012unbounded}]) we directly conclude that $(H_{\alpha,\beta}-z)^{-1}-(H^{D}_{\alpha,\beta}-z)^{-1}$ is of finite rank and hence of trace class for $z \in \rho(H_{\alpha,\beta}) \cap \rho(H^{D}_{\alpha,\beta})$. Furthermore, on $L^{2}_{\oplus}$,
 	\begin{equation*}\begin{split}
 	\Omega_{\pm}(H^{D}_{\alpha,\beta},H^{D}_{\alpha=0,\beta=0})&=\Omega_{\pm}(H_1(V) \oplus H_2(V),H_1(0) \oplus H_2(0))\\
 	&=0 \oplus \Omega_{\pm}(H_2(V),H_2(0))\ .
 	\end{split}
 	\end{equation*}
 	In the last step we used that $H_1(V)$ has purely discrete spectrum. Now, due to the integrability of the potential in $H_2(V)$, standard results imply that the wave operators $\Omega_{\pm}(H_2(V),H_2(0))$ and hence $\Omega_{\pm}(H^{D}_{\alpha,\beta},H^{D}_{\alpha=0,\beta=0})$ are complete~\cite{Yafaev:1992,Yafaev:2010}. 
 	
 	Finally, Krein's resolvent formula implies that the wave operators $\Omega_{\pm}(H_1(0) \oplus H_2(0),H_0)$ are complete since the difference of the resolvents is of finite rank and hence of trace class. The statement then follows by the well-known chain rule for wave operators [Proposition~15.2.2,\cite{BEH08}].

 	%
 	%
 \end{proof}
 \section{On an invariant domain of the time-evolution operator and an estimate}\label{SectionInvariant}

 As in~\cite{RadinSimon} we are interested in establishing an estimate on, 
\begin{equation*}
\langle \varphi_t, x^2 \varphi_t\rangle_{L^2(\mathbb{R}_+)}\ ,
\end{equation*}
i.e., the expectation value of the square of the position operator a time $t > 0$ for arbitrary initial datum $\varphi\in L^2(\mathbb{R}_+)$. We again stress that the results of \cite{RadinSimon} are not directly applicable since the potential $V$ is not contained in the Kato class. 

Note that we write $\hat{f}\in L^2(\mathbb{R}_+)$ for the restriction of Fourier transform of
\begin{equation*}
\tilde{f}(x):=\begin{cases} f(x)\ , \quad x\geq 0\ ,\\
f(-x)\ , \quad x < 0\ ,
\end{cases}
\end{equation*}
onto $\mathbb{R}_+$ where $f \in L^2(\mathbb{R}_+)$. As a first result we establish the following, where we write $-\Delta=-\frac{\ud^2}{\ud x^2}$.
\begin{lemma}\label{TimeEstimateI} For any $\varphi \in \cD_q$ there exist constants $c,d > 0$ such that
	\begin{equation*}
	|\langle \varphi_t, x^2 \varphi_t\rangle_{L^2(\mathbb{R}_+)}| \leq (c+d\cdot t)^2 \cdot \|\varphi\|^2_{1}
	\end{equation*}
	where 
	\begin{equation*}
	\|\varphi\|^2_{1}:=\|\varphi\|^2_{L^2(\mathbb{R}_+)}+\|x\varphi\|^2_{L^2(\mathbb{R}_+)}+\|(-\Delta)^{1/2}\varphi\|^2_{L^2(\mathbb{R}_+)}+\|(V+|\gamma|)^{1/2}\varphi\|^2_{L^2(\mathbb{R}_+)}\ ,
	\end{equation*}
	setting $\gamma:=\inf V(x)$.
\end{lemma}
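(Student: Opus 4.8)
The plan is to adapt the strategy of Radin and Simon~\cite{RadinSimon}: first bound the kinetic energy along the evolution by a conservation law, and then show that $t\mapsto\|x\varphi_t\|_{L^2(\mathbb{R}_+)}$ grows at most linearly by a Gr\"onwall-type differential inequality, where throughout $\varphi_t:=\ue^{-\ui H_{\alpha,\beta}t}\varphi$. We may assume $\|\varphi\|_1<\infty$, the estimate being trivial otherwise. The energy bound rests on the fact that for $\varphi\in\cD_q$ one has $\varphi_t\in\cD_q$ for all $t$ and, since $H_{\alpha,\beta}+|\gamma|\geq 0$ commutes with the unitary group, the quantity $\langle\varphi_t,(H_{\alpha,\beta}+|\gamma|)\varphi_t\rangle=q_{\alpha,\beta}[\varphi_t]+|\gamma|\,\|\varphi_t\|^2_{L^2(\mathbb{R}_+)}$ is independent of $t$; together with $V+|\gamma|\geq 0$ this yields, for every $t$,
\[
\begin{aligned}
\|(-\Delta)^{1/2}\varphi_t\|^2_{L^2(\mathbb{R}_+)} &\leq q_{\alpha,\beta}[\varphi_t]+|\gamma|\,\|\varphi_t\|^2_{L^2(\mathbb{R}_+)}=q_{\alpha,\beta}[\varphi]+|\gamma|\,\|\varphi\|^2_{L^2(\mathbb{R}_+)}\\
&=\|(-\Delta)^{1/2}\varphi\|^2_{L^2(\mathbb{R}_+)}+\|(V+|\gamma|)^{1/2}\varphi\|^2_{L^2(\mathbb{R}_+)}\leq\|\varphi\|^2_1 .
\end{aligned}
\]

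Next I would introduce smooth cutoffs $\phi_n\colon[0,\infty)\to[0,\infty)$ satisfying $\phi_n(x)=x$ for $x\leq n$, $0\leq\phi_n'\leq 1$, $0\leq\phi_n\leq x$, $\sup_x\phi_n(x)<\infty$, and $\phi_n(x)^2\uparrow x^2$ pointwise as $n\to\infty$ (e.g.\ $\phi_n(x)=n\rho(x/n)$ with $\rho$ smooth, concave, bounded, $\rho(s)=s$ for $s\leq1$, $0\leq\rho'\leq1$), and set
\[
A_n(t):=\langle\varphi_t,\phi_n^2\varphi_t\rangle=\int_{\mathbb{R}_+}\phi_n(x)^2|\varphi_t(x)|^2\,\ud x .
\]
For $\varphi\in\cD(H_{\alpha,\beta})$, multiplication by the smooth bounded function $\phi_n^2$ maps $\cD(H_{\alpha,\beta})$ into itself (a routine consequence of the $H^2$-regularity of Theorem~\ref{PropZeroI}), so $A_n$ is differentiable with $A_n'(t)=\ui\langle\varphi_t,[H_{\alpha,\beta},\phi_n^2]\varphi_t\rangle=\ui\langle\varphi_t,[-\Delta,\phi_n^2]\varphi_t\rangle$, the potential commuting with multiplication operators. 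An integration by parts --- the boundary term at $x=0$ dropping out since $\varphi_t(0)=0$ by Proposition~\ref{PropZero}, and there being no contribution at infinity --- then gives
\[
A_n'(t)=4\int_{\mathbb{R}_+}\phi_n(x)\,\phi_n'(x)\,\im\!\bigl(\overline{\varphi_t(x)}\,\varphi_t'(x)\bigr)\,\ud x ,
\]
and, after integration in $t$, this identity extends from $\cD(H_{\alpha,\beta})$ to all of $\cD_q$ because $\cD(H_{\alpha,\beta})$ is a form core and both sides are continuous in $\varphi$ with respect to the form norm. By the Cauchy--Schwarz inequality and $0\leq\phi_n'\leq 1$,
\[
\begin{aligned}
|A_n'(t)| &\leq 4\Bigl(\int_{\mathbb{R}_+}\phi_n(x)^2|\varphi_t(x)|^2\,\ud x\Bigr)^{1/2}\|\varphi_t'\|_{L^2(\mathbb{R}_+)}\\
&=4\sqrt{A_n(t)}\;\|(-\Delta)^{1/2}\varphi_t\|_{L^2(\mathbb{R}_+)}\leq 4\sqrt{A_n(t)}\;\|\varphi\|_1 ,
\end{aligned}
\]
so a Gr\"onwall-type argument yields $\sqrt{A_n(t)}\leq\sqrt{A_n(0)}+2t\,\|\varphi\|_1\leq\|x\varphi\|_{L^2(\mathbb{R}_+)}+2t\,\|\varphi\|_1\leq(1+2t)\|\varphi\|_1$, uniformly in $n$. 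Letting $n\to\infty$ and invoking monotone convergence, $A_n(t)\uparrow\langle\varphi_t,x^2\varphi_t\rangle$, so $\langle\varphi_t,x^2\varphi_t\rangle\leq(1+2t)^2\|\varphi\|^2_1$, which is the assertion with $c=1$ and $d=2$.

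The main obstacle I anticipate is the rigorous justification of the time-differentiation of $A_n$ together with the integration by parts --- the domain bookkeeping and, above all, verifying that no boundary term survives at $x=0$, which is precisely where the Dirichlet condition of Proposition~\ref{PropZero} enters. A further delicate point is that $\|x\,\cdot\|_{L^2(\mathbb{R}_+)}$ is \emph{not} dominated by the form norm, so the passage from $\cD(H_{\alpha,\beta})$ to $\cD_q$ must be performed at the level of the (form-norm continuous) identity for $A_n'$ and its time-integral, with the bound $A_n(0)=\|\phi_n\varphi\|^2_{L^2(\mathbb{R}_+)}\leq\|x\varphi\|^2_{L^2(\mathbb{R}_+)}\leq\|\varphi\|^2_1$ invoked only afterwards; everything else is routine.
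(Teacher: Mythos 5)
Your proposal is correct, and its first half (the kinetic-energy bound via conservation of $\langle\varphi_t,(H_{\alpha,\beta}+|\gamma|)\varphi_t\rangle$ and positivity of $V+|\gamma|$) is exactly the argument in the paper. Where you diverge is in the second half: the paper simply quotes the propagation inequality
\[
\langle \varphi_t, x^2 \varphi_t\rangle^{1/2} \leq \langle \varphi, x^2 \varphi\rangle^{1/2}+2\int_{0}^{t}\|(-\Delta)^{1/2}\varphi_s\|\,\ud s
\]
from [eq.~(4),\cite{RadinSimon}] and combines it with the energy bound, whereas you re-derive this inequality from scratch via bounded cutoffs $\phi_n$, the commutator identity $A_n'(t)=4\int\phi_n\phi_n'\im(\overline{\varphi_t}\varphi_t')\,\ud x$ (with the boundary term at $x=0$ killed by Proposition~\ref{PropZero}), Cauchy--Schwarz, a Gr\"onwall step, and monotone convergence in $n$. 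Your route is longer but arguably more self-contained and better adapted to the present setting: Radin--Simon work with Kato-class potentials on $\mathbb{R}^m$, and the whole point of this paper is that $V$ is not Kato-class and the configuration space is the half-line, so citing their eq.~(4) verbatim tacitly assumes its derivation carries over --- which is precisely what your cutoff/commutator argument verifies. You also correctly flag the one genuinely delicate point, namely that $\|x\,\cdot\|_{L^2}$ is not controlled by the form norm, so the density argument from $\cD(H_{\alpha,\beta})$ to $\cD_q$ must be run on the integrated identity for $A_n$ before letting $n\to\infty$; together with the observation that $\|(H_{\alpha,\beta}+|\gamma|+1)^{1/2}\cdot\|$ commutes with the unitary group (so form-norm approximation is uniform in $t$), this closes the argument. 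The only cosmetic remark is that the differentiation of $A_n$ can be done entirely at the form level (writing $A_n'(t)=-2\im q_{\alpha,\beta}[\varphi_t,\phi_n^2\varphi_t]$), which sidesteps the need to check that $\phi_n^2$ preserves the operator domain; but as written your argument is sound.
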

\begin{proof} Due to positivity of $V+|\gamma|$ one has 
	\begin{equation*}\begin{split}
	\|(-\Delta)^{1/2}\varphi_t\|^2_{L^2(\mathbb{R}_+)} &\leq \|(-\Delta+V+|\gamma|)^{1/2}\varphi_t\|^2_{L^2(\mathbb{R}_+)} \\
	&= \|(-\Delta+V+|\gamma|)^{1/2}\varphi\|^2_{L^2(\mathbb{R}_+)}\\
	&=\|(-\Delta)^{1/2}\varphi\|^2_{L^2(\mathbb{R}_+)}+\|(V+|\gamma|)^{1/2}\varphi\|^2_{L^2(\mathbb{R}_+)}\\
	&\leq \|\varphi\|^2_{1}\ .
	\end{split}
	\end{equation*}
Now the statement readily follows from [eq.(4),\cite{RadinSimon}] which states that
\begin{equation*}
\langle \varphi_t, x^2 \varphi_t\rangle_{L^2(\mathbb{R}_+)}^{1/2} \leq \langle \varphi, x^2 \varphi\rangle_{L^2(\mathbb{R}_+)}^{1/2}+2\int_{0}^{t}\|(-\Delta)^{1/2}\varphi_s\|_{L^2(\mathbb{R}_+)}\ \ud s\ .
\end{equation*}
\end{proof}

In fact, one can directly show the following. 
\begin{theorem} Define the subspace 
	\begin{equation*}
	S:=\{\varphi \in \cD_q:\ \|\varphi\|_{1}< \infty  \}\ .
	\end{equation*}
	Then $e^{-iHt}$ maps $S$ onto $S$ as a bounded operator, i.e., 
	\begin{equation}\label{EquationEstimate}
	\|e^{-iHt}\varphi\|_{1} \leq (c+d\cdot t)\cdot \|\varphi\|_{1}\ , \qquad \forall \varphi \in S\ ,
	\end{equation}
	for some constants $c,d > 0$.
\end{theorem}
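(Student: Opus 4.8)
The plan is to control each of the four terms appearing in $\|e^{-iHt}\varphi\|_1^2$ separately and then combine. The $L^2$-norm is trivially conserved since $e^{-iHt}$ is unitary, so $\|\varphi_t\|_{L^2(\mathbb{R}_+)}=\|\varphi\|_{L^2(\mathbb{R}_+)}$. The term $\|x\varphi_t\|_{L^2(\mathbb{R}_+)}^2=\langle\varphi_t,x^2\varphi_t\rangle$ is exactly what Lemma~\ref{TimeEstimateI} bounds, giving $\|x\varphi_t\|_{L^2(\mathbb{R}_+)}\leq(c+d\cdot t)\|\varphi\|_1$. For the two remaining terms, $\|(-\Delta)^{1/2}\varphi_t\|_{L^2(\mathbb{R}_+)}^2+\|(V+|\gamma|)^{1/2}\varphi_t\|_{L^2(\mathbb{R}_+)}^2$, I would note that this sum equals $\langle\varphi_t,(-\Delta+V+|\gamma|)\varphi_t\rangle=q_{\alpha,\beta}[\varphi_t]+|\gamma|\,\|\varphi_t\|_{L^2(\mathbb{R}_+)}^2$, and since $H=H_{\alpha,\beta}=-\Delta+V$ commutes with $e^{-iHt}$ (so $\langle\varphi_t,H\varphi_t\rangle=\langle\varphi,H\varphi\rangle$, using that $\varphi\in S\subset\cD_q$ so the quadratic form is finite and conserved), we get $\|(-\Delta)^{1/2}\varphi_t\|_{L^2(\mathbb{R}_+)}^2+\|(V+|\gamma|)^{1/2}\varphi_t\|_{L^2(\mathbb{R}_+)}^2=\|(-\Delta)^{1/2}\varphi\|_{L^2(\mathbb{R}_+)}^2+\|(V+|\gamma|)^{1/2}\varphi\|_{L^2(\mathbb{R}_+)}^2\leq\|\varphi\|_1^2$, exactly as in the first display of the proof of Lemma~\ref{TimeEstimateI}. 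In particular these two terms are uniformly bounded in $t$, which also shows $\varphi_t\in\cD_q$ so that $\varphi_t\in S$ and the map is well-defined on $S$.

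Assembling, I would write
\begin{equation*}
\|\varphi_t\|_1^2=\|\varphi_t\|_{L^2(\mathbb{R}_+)}^2+\|x\varphi_t\|_{L^2(\mathbb{R}_+)}^2+\|(-\Delta)^{1/2}\varphi_t\|_{L^2(\mathbb{R}_+)}^2+\|(V+|\gamma|)^{1/2}\varphi_t\|_{L^2(\mathbb{R}_+)}^2
\end{equation*}
and bound the first, third and fourth terms by $\|\varphi\|_1^2$ and the second by $(c+d\cdot t)^2\|\varphi\|_1^2$, yielding $\|\varphi_t\|_1^2\leq\bigl(3+(c+d\cdot t)^2\bigr)\|\varphi\|_1^2$. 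Enlarging the constants (e.g.\ replacing $c$ by $\sqrt{c^2+3}$, or just by $c+2$) gives the stated form $\|e^{-iHt}\varphi\|_1\leq(\tilde c+\tilde d\cdot t)\|\varphi\|_1$. Surjectivity onto $S$ follows because $e^{-iHt}$ is invertible with inverse $e^{iHt}$, which by the same estimate also maps $S$ into $S$; hence $e^{-iHt}(S)=S$.

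The main obstacle, such as it is, lies in justifying the conservation of the quadratic form along the flow, i.e.\ that $q_{\alpha,\beta}[\varphi_t]=q_{\alpha,\beta}[\varphi]$ for $\varphi\in\cD_q$ and that $\varphi_t$ stays in $\cD_q$. This is the statement that $e^{-iHt}$ restricts to a unitary (indeed, the identity on the quadratic-form level) on the form domain equipped with the form norm — a standard fact from the spectral theorem, since $\varphi\in\cD_q=\cD((H+c)^{1/2})$ and $(H+c)^{1/2}$ commutes with $e^{-iHt}$, so $\|(H+c)^{1/2}\varphi_t\|_{L^2(\mathbb{R}_+)}=\|(H+c)^{1/2}\varphi\|_{L^2(\mathbb{R}_+)}$. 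I would simply invoke this, as it is already implicitly used in the proof of Lemma~\ref{TimeEstimateI}. Everything else is bookkeeping: the only genuinely $t$-dependent growth comes through Lemma~\ref{TimeEstimateI}, and it is linear in $t$ as required.
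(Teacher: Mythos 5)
Your proposal is correct and follows essentially the same route as the paper: the paper likewise invokes Lemma~\ref{TimeEstimateI} for the $x^2$-term, conserves $\|(-\Delta+V+|\gamma|)^{1/2}\varphi_t\|_{L^2(\mathbb{R}_+)}$ via the spectral theorem to control the kinetic and potential terms, and obtains surjectivity from $e^{iHt}\psi\in S$. Your write-up is merely more explicit about the four-term bookkeeping and the combination of constants.
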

\begin{proof} Due to Lemma~\ref{TimeEstimateI} and its proof, estimate~\eqref{EquationEstimate} follows from 
	\begin{equation*}\begin{split}
	\|(V+|\gamma|)^{1/2}\varphi_t\|^2_{L^2(\mathbb{R}_+)} &\leq \|(-\Delta+V+|\gamma|)^{1/2}\varphi_t\|^2_{L^2(\mathbb{R}_+)}\\
	&=\|(-\Delta+V+|\gamma|)^{1/2}\varphi\|^2_{L^2(\mathbb{R}_+)}\\ 
	&=\|(-\Delta)^{1/2}\varphi\|^2_{L^2(\mathbb{R}_+)}+\|(V+|\gamma|)^{1/2}\varphi\|^2_{L^2(\mathbb{R}_+)}\ .
	\end{split}
	\end{equation*}
	Hence $e^{-iHt}$ is a bounded operator into $S$. Now suppose that $e^{-iHt}$ is not onto $S$. Then there exists an element $\psi \in S$ such that $e^{-iHt}\varphi-\psi\neq 0$ for all $\varphi \in S$. However, choosing $\varphi:=e^{iHt}\psi \in S$ on arrives at a contradiction and the statement is proved. 
\end{proof}

\vspace*{0.5cm}

\subsection*{Acknowledgement}{The authors are very happy to thank R.~Weder (Universidad Nacional Aut\'{o}noma de M\'{e}xico) for many helpful comments on the manuscript. JK also wants to thank S.~Egger (Technion, Israel) for helpful discussions. We also want to thank the referee for pointing out interesting references.} 

\vspace*{0.5cm}

{\small
\bibliographystyle{amsalpha}
\bibliography{Literature}}

\end{document}